\title{Reverse derivative categories 
\footnote{This paper is the result of a joint working session that the authors 
participated in at the Foundational Methods in Computer Science workshop
in June, 2019.}}
\author{Robin Cockett}{University of Calgary, Department of Computer Science, Canada}{robin@ucalgary.ca}{}{Partially supported by NSERC (Canada)}
\author{
  Geoffrey Cruttwell
}{Mount Allison University, Department of Mathematics and Computer Science, Canada}{gcruttwell@mta.ca}{}{Partially supported by NSERC (Canada)}
\author{Jonathan Gallagher}{Dalhousie University, Department of Mathematics and Statistics, Canada}{jonathan.gallagher@dal.ca}{}{Supported in part by NSERC and AARMS (Canada)} 
\author{Jean-Simon Pacaud Lemay}{University of Oxford, Department of Computer Science, UK}{jean-simon.lemay@kellogg.ox.ac.uk}{}{Supported by Kellogg College, the Clarendon Fund, and the Oxford Google-DeepMind Graduate Scholarship (UK)}
\author{Benjamin MacAdam}{University of Calgary, Department of Computer Science, Canada}{benjamin.macadam@ucalgary.ca}{}{Partially supported by NSERC (Canada)}
\author{Gordon Plotkin}{Google Research}{gdp@inf.ed.ac.uk}{}{Supported by ESPRC (UK)}
\author{Dorette Pronk}{Dalhousie University, Department of Mathematics and Statistics, Canada}{dorette.pronk@dal.ca}{}{Partially supported by NSERC (Canada)}
\authorrunning{CCGLMPP}
\keywords{Reverse Derivatives, Cartesian Reverse Differential Categories, Categorical Semantics, Cartesian Differential Categories, Dagger Categories, Automatic Differentiation}
\tikzset{snake it/.style={decorate, decoration=snake}}
\newcommand{\R}{\mathsf{R}}
\newcommand{\D}{\mathsf{D}}
\newcommand{\x}{\times}
\newcommand{\ox}{\otimes}
\newcommand{\blank}{\underline{~}}
\newcommand{\X}{\mathbb{X}}
\newcommand{\<}{\left\langle}
\renewcommand{\>}{\right\rangle}
\newcommand{\Y}{\mathbb{Y}}
\newcommand{\RR}{\R^{(2)}}
\newcommand{\smooth}{\mathsf{Smooth}}
\renewcommand{\lim}{\mathsf{lim}}
\newcommand{\ex}{\mathsf{ex}}
\begin{document}
\maketitle 

\begin{abstract}
The reverse derivative is a fundamental operation in machine learning and 
automatic differentiation \cite{tensorflow2015-whitepaper, griewank2012invented}.  This paper gives a direct axiomatization of a 
category with a reverse derivative operation, in a similar style to that given by \cite{journal:BCS:CDC}
for a forward derivative.  Intriguingly, a category with a reverse derivative 
also has a forward derivative, but the converse is not true.  In fact, we show explicitly what a forward derivative is missing: 
a reverse derivative is equivalent to a forward derivative with a
dagger structure on its subcategory of linear maps. Furthermore, we show that these linear maps form an additively enriched category with dagger biproducts.
\end{abstract}

\section{Introduction}

 

The use of derivatives and differentiation in programming and machine learning is becoming ubiquitous. 
As a result, there has been an increased interest in axiomatic setups for differentiation; in particular, categorical models for differentiation have become more central.  There are two types of derivative operations used in programming: the \emph{forward} derivative and the \emph{reverse} derivative.  From the programmer's perspective, it is much more common for the reverse derivative to play the central role due to its increased efficiency and improved accuracy when computing with functions from $\mathbb{R}^n$ to $\mathbb{R}$ (due to the so called cheap gradient principle). The importance of this principle was already recognized by Linnainmaa in 1976 \cite{linnainmaa} and was specifically used for back-propagation in multi-layer networks and deep learning.  This was further spelled out in detail in \cite{backpropagation}. Also, Tensorflow,  Google's new interface for expressing machine learning algorithms, uses the reverse mode of automatic differentiation as the basic building block minimizing cost functions \cite{tensorflow2015-whitepaper}. 

The categorical approaches to differentiation to date have all exclusively focused on the abstract properties of the forward derivative \cite{journal:BCS:CDC}.  This thus leaves a significant gap which needs to be filled: an axiomatic categorical setting for reverse differentiation.  The main goal of this paper is to introduce such a structure and explore some of its properties and consequences.

A ``Cartesian reverse differential category'' (a category equipped with a reverse derivative operation as introduced in this paper) is already a Cartesian differential category (the standard axiomatics for a category with a forward derivative).  We show that a category equipped with a reverse derivative also has a forward derivative (i.e., it has a Cartesian differential structure).  Moreover, a reverse differential category has a fibered dagger structure on its subcategory of linear maps, a structure which does not automatically exist in a Cartesian differential category.  Suitably axiomatized, we show that having such a dagger structure is enough to ensure that a Cartesian differential category structure gives a reverse differential category.  These results  provide a starting point to build categorical semantics of differential programming languages \cite{talk:plotkin-2018}, as they provide axiomatically enough structure to handle both forward and reverse derivatives.

The paper is structured as follows.  In section \ref{section:forward-der}, we recall the basic notation and definitions of a Cartesian differential category (``a category equipped with a forward derivative'').  We do this first to acclimatize the reader to the general style of this categorical definition, and to recall the structure of Cartesian left additive categories, which are necessary to define both forward and reverse differential categories.   In section \ref{section:reverse-der}, we introduce our definition of a reverse differential category.  We explore some of the important consequences of the definition noted above: (a) Cartesian differential structure, (b) how to define and work with linear maps in this setting, and (c) a dagger structure on the linear maps.  In section \ref{section:cont-dagger}, we show how to go back: given a Cartesian differential category with a ``contextual dagger'', we build a Cartesian reverse differential category.  There is much more work to be done with this structure and these ideas: in section \ref{section:conclusion}, we describe some of the ways in which this work can be extended, including allowing partial functions.  

As far as we are aware, this paper represents the first categorical axiomatization of the reverse derivative.  However, \cite{elliott-AD-ICFP} does have some related ideas.  There, the relationship between the reverse derivative and coproducts was noticed, and the author specified an internal category which satisfies some of the axioms of a Cartesian differential category in a functional programming language. This work expands that observation by developing the dagger biproduct structure using the reverse derivative and relating this to the dual of the simple slice fibration.



\section{Forward derivatives}\label{section:forward-der}

The standard setting for a ``category with a forward derivative'' is a Cartesian differential category, first introduced in \cite{journal:BCS:CDC}.  Following that paper, we write composition in diagrammatic order, so that $f$, followed by $g$, is written as $fg$.  

\subsection{Cartesian left additive categories}\label{section:clacs}

A Cartesian differential category first consists of a Cartesian left additive category, and so we begin by recalling this notion. Recall that a category $\X$ is said to be {\bf Cartesian} when there are chosen binary products $\times$, with projection maps $\pi_i$ and pairing operation $\langle - , - \rangle$, and a chosen terminal object $\mathbf{1}$, with unique maps $!$ to the terminal object. 

\begin{definition} A {\bf left additive category} \cite[Definition  1.1.1]{journal:BCS:CDC} is a category $\X$ such that each hom-set  is a commutative monoid, with addition operation + and zero maps 0, such that composition on the left preserves the additives structure in the sense that $x(f+g) = xf + xg$ and $x0=0$.  Maps $h$ which preserve the additive structure by composition on the right ($(x+y)h = xh + yh$ and $0h=0$) are called {\bf additive}. A {\bf Cartesian left additive category} \cite[Definition 1.2.1]{journal:BCS:CDC} is a left additive category $\X$ which is Cartesian and such that all projection maps $\pi_i$ are additive\footnote{Note that this a slight variation on the definition of a Cartesian left additive category found in \cite{journal:BCS:CDC}, but it is indeed equivalent.}. 
\end{definition}

Cartesian left additive categories can alternatively be defined as Cartesian categories in which each object $A$ canonically bears the structure of a commutative monoid with addition $+_A: A \x A \to A$ and zero $0_A: \mathbf{1} \to A$. 

\begin{example} Here are examples of Cartesian left additive categories that we will consider throughout this paper: 
\begin{enumerate}
\item  Any category with finite biproducts is a Cartesian left additive category where every map is additive.  And conversely, in a Cartesian left additive category where every map is additive, the finite product is a finite biproduct \cite[Proposition 1.2.2]{journal:BCS:CDC}. 
\item Let $R$ be a commutive rig (also known as a commutative semiring). Let $\mathsf{POLY}_R$ be the category of polynomials with coefficients in $R$; that is, the category whose objects are the natural numbers $n \in \mathbb{N}$ and where a map $n \to^{P} m$ is an $m$-tuple of polynomials $P := \langle p_1(\vec x), \hdots, p_m(\vec x) \rangle$, where $p_i(\vec x) \in R[x_1, \hdots, x_n]$ (the polynomial ring in $n$-variables over $R$). $\mathsf{POLY}_R$ is a Cartesian left additive category where composition is given by the standard composition of polynomials, the product on objects is given by the sum of natural numbers, and the additive structure is given by the sum of polynomials. 
\item Let $\mathbb{R}$ be the set of real numbers and let $\smooth$ be the category of smooth real functions, that is, the category whose objects are again the natural numbers $n \in \mathbb{N}$ and where a map $n \to^{F} m$ is a smooth function $\mathbb{R}^n \to^{F} \mathbb{R}^m$. $\smooth$ is a Cartesian left additive category where composition is given by the standard composition of smooth functions, the product on objects is given by the sum of natural numbers, and the additive structure is given by the sum of smooth functions. Note that a smooth map $\mathbb{R}^n \to^{F} \mathbb{R}^m$ is actually an $m$-tuple of smooth functions $F = \langle f_1, \hdots, f_m \rangle$, where $\mathbb{R}^n \to^{f_i} \mathbb{R}$ and therefore $\mathsf{POLY}_\mathbb{R}$ is a sub-Cartesian left additive category of $\smooth$.

\end{enumerate}

\end{example}

As not every map in a Cartesian left additive category is additive, the product $\times$ is not a coproduct, thus is \emph{not} a biproduct. However, it is still possible to define injection maps. So in a Cartesian left additive category,  define $\iota_0 := \<1,0\>:A \to A\x B$ and ${\iota_1 := \<0,1\>: B\to A\x B}$. For maps $A \to^{f} C$ and $B\to^{g} C$, we
define $\<f|g\> := \pi_0 f + \pi_1 g : A \x B \to C$, and finally for maps $A \to^{h} B$ and $C\to^{k} D$ we write $h\oplus k:= \<h\iota_0|k\iota_1\>: A \times B \to C \times D$. 
 Although this notation is suggestive, we again stress this is not part of a coproduct or biproduct structure.  However, in what follows we will define the category of linear maps where the above will witness a biproduct structure on that category. We leave the following lemma as an easy exercise to the reader: 

\begin{lemma}\label{lemma:clac-simple} In a Cartesian left additive category, $f\iota_0 + g \iota_1 = \<f,g\>$ and $h\oplus k = h\x k$. 
\end{lemma}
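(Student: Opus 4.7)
The plan is to reduce both identities to the universal property of the product, making essential use of the fact that in a Cartesian left additive category the projections $\pi_0, \pi_1$ are additive (by assumption) and that left composition preserves addition and zero (by the left additive structure).

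For the first identity $f\iota_0 + g\iota_1 = \langle f, g\rangle$, I would verify equality after postcomposition with each projection. Using additivity of $\pi_i$ on the right, I get
\[
(f\iota_0 + g\iota_1)\pi_0 = f\iota_0\pi_0 + g\iota_1\pi_0 = f\cdot 1 + g\cdot 0 = f + 0 = f,
\]
and analogously $(f\iota_0 + g\iota_1)\pi_1 = g$, using the definitions $\iota_0 = \langle 1, 0\rangle$ and $\iota_1 = \langle 0, 1\rangle$, together with the left additive rules $g\cdot 0 = 0$ and $f + 0 = f$. Since $\langle f, g\rangle$ is uniquely determined by $\langle f,g\rangle \pi_0 = f$ and $\langle f, g\rangle \pi_1 = g$, the two maps must coincide.

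For the second identity $h \oplus k = h \times k$, I would simply unfold the definition $h \oplus k = \langle h\iota_0 \mid k\iota_1\rangle = \pi_0 h \iota_0 + \pi_1 k \iota_1$, and then apply the first identity (with $f := \pi_0 h$ and $g := \pi_1 k$) to rewrite this sum as $\langle \pi_0 h, \pi_1 k\rangle$, which is by definition $h \times k$.

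There is no real obstacle here; the only subtlety is to remember that additivity of $\pi_i$ is needed to distribute the projections across the sum $f\iota_0 + g\iota_1$, since not every map in a Cartesian left additive category is additive. Once that is in hand, both identities are straightforward consequences of the universal property of the product.
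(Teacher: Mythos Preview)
Your proof is correct. The paper actually leaves this lemma as an exercise to the reader, so there is no proof in the paper to compare against; your argument via the universal property of the product, using additivity of the projections and left additivity for the $g\cdot 0 = 0$ step, is exactly the intended straightforward verification.
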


\subsection{Cartesian differential categories}\label{section:foward-der}
This section reviews Cartesian differential categories which provide the
semantics for forward differentiation \cite{journal:BCS:CDC}.

\begin{definition}
  A {\bf Cartesian differential category} \cite{journal:BCS:CDC} is a Cartesian left additive category 
  with a combinator $\D$, called the {\bf differential combinator}, which written as an inference rule is given by: 
$$ \infer{A\x A\to_{\D[f]} B}{A\to^{f} B}  $$ 
where $\D[f]$ is called the derivative of $f$, and such that the following equalities hold\footnote{Note that the order of variables is different here than in  \cite{journal:BCS:CDC}; here, we write the vector variable in the second component, as this more closely aligns with standard differential calculus notation.}: 
  \begin{enumerate}[{\bf [CDC.1]}]
    \item 
    $\D[f+g] = \D[f] + \D[g]$ and $\D[0]=0$;
    \item 
    $\<a,b+c\>\D[f] = \<a,b\> \D[f]+\<a,c\>\D[f]$ and $\<a,0\>\D[f] = 0$;
    \item 
    $\D[1]=\pi_1$, $\D[\pi_0]=\pi_1\pi_0$, and $\D[\pi_1] =\pi_1\pi_1$;
    \item 
    $\D[\<f,g\>] = \<\D[f],\D[g]\>$;
    \item 
    $\D[fg] = \<\pi_0f,\D[f]\>\D[g]$;
    \item 
    $\<\<a,b\>,\<0,c\>\>\D[\D[f]] = \<a,c\>\D[f]$;
    \item 
    $\<\<a,b\>,\<c,d\>\>\D[\D[f]] = \<\<a,c\>,\<b,d\>\>\D[\D[f]]$.
  \end{enumerate}
\end{definition}

For an in-depth commentary on these axioms, we invite the reader to see the original Cartesian differential category paper  \cite{journal:BCS:CDC}. Briefly, {\bf [CDC.1]} is that the derivative of a sum is the sum of the derivatives, {\bf [CDC.2]} states that derivatives are additive in their second argument, {\bf [CDC.3]} says that the identity and projection maps are linear (more on what this means soon), {\bf [CDC.4]} is that the derivative of a pairing is the pairing of the derivatives, {\bf [CDC.5]} is the famous chain rule, {\bf [CDC.6]} says that the derivative is linear in its second argument, and finally {\bf [CDC.7]} is the symmetry of the mixed partial derivatives. 

\begin{example}\label{example:CDC} Here are some well-known examples of Cartesian differential categories.
\begin{enumerate}
\item Every category with finite biproducts is a Cartesian differential category where for a map $A \to^{f} B$, its derivative $A \oplus A \to^{\D[f]} B$ is defined as $\D[f] := A \oplus A \to^{\pi_1} A \to^f B$. 
\item Let $R$ be a commutative rig. $\mathsf{POLY}_R$ is a Cartesian differential category whose differential combinator is given by the standard differentiation of polynomials. By {\bf [CDC.4]}, since every map in $\mathsf{POLY}_R$ is a tuple, it is sufficient to define the derivative of maps $n \to^p 1$, which are polynomials $p(\vec x) \in R[x_1, \hdots, x_n]$. Then its derivative $n \times n \to^{\D[p]} 1$, viewed as polynomials $\D[p](\vec x, \vec y) \in R[x_1, \hdots, x_n, y_1, \hdots, y_n]$, is defined by the sum of partial derivatives of $p(\vec x)$:
$$ \D[p](\vec x, \vec y) := \sum \limits^n_{i=1} \frac{\partial p}{\partial x_i}(\vec x) y_i $$
For example, consider the polynomial $p(x_1,x_2) = x_1^2 + 3x_1x_2 + 5x_2$, so $2 \to^p 1$, then $4 \to^{\D[p]} 1$ is $ \D[p](x_1, x_2, y_1, y_2) = (2x_1 + 3x_2)y_1 +  (3x_1 + 5)y_2$. On the other hand, for a map $n \to^P m$, which is a tuple $P := \langle p_1(\vec x), \hdots, p_m(\vec x) \rangle$, its derivative is the tuple $\D[P] := \langle \D[p_1](\vec x, \vec y), \hdots, \D[p_m](\vec x, \vec y) \rangle$. 
\item   The category $\smooth$ is a Cartesian differential category where for a map $n \to^F m$, which is a smooth function $\mathbb{R}^n \to^F \mathbb{R}^m$, its derivative $\mathbb{R}^n \x \mathbb{R}^n \to^{\D[F]} \mathbb{R}^m$ is defined as 
$$\D[F](\vec x, \vec v) := J_F(\vec x) \cdot v$$
where $J_F(\vec x)$ is the Jacobian of $F$ at $\vec x$ and where $\cdot$ is matrix multiplication. Of course, similar to the previous example, as every $F$ can be viewed as a tuple, by  {\bf [CDC.4]}, it would have also been sufficient to define the differential combinator for smooth maps $\mathbb{R}^n \to^f \mathbb{R}$. In this case, $J_f(x)$ is better known as the gradient of $f$, $\nabla(f)(x) := \langle \frac{\partial f}{\partial x_1}(\vec x), \hdots, \frac{\partial f}{\partial x_n}(\vec x) \rangle$, and so $\mathbb{R}^n \x \mathbb{R}^n \to^{\D[F]} \mathbb{R}$ is: 
$$ \D[f](\vec x, \vec y) := \nabla(f)(\vec x) \cdot \vec y = \sum \limits^n_{i=1} \frac{\partial f}{\partial x_i}(\vec x) y_i $$
This clearly shows that $\mathsf{POLY}_\mathbb{R}$ is a sub-Cartesian differential category of $\smooth$. 
\end{enumerate}
\end{example}

We now provide a few lemmas that give alternative views on the axioms of a Cartesian differential category; these will be helpful when comparing this structure to a reverse differential category. Note that while the first lemma shows that {\bf [CDC.4]} is actually redundant, to keep the numbering of the equations consistent with past literature on Cartesian differential categories, we chose to 
include it in the definition.

\begin{lemma}\label{lemma:cd4-lemma} \cite[Lemma 2.8]{lemay2018tangent}
  In a Cartesian differential category,
  {\bf [CDC.4]} is redundant.
\end{lemma}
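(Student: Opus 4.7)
The plan is to derive \textbf{[CDC.4]} from \textbf{[CDC.3]}, \textbf{[CDC.5]}, and the universal property of the binary product $\times$. The core observation is that the pairing $\langle f, g\rangle\colon A\to B\times C$ is uniquely determined by its post-composites with $\pi_0$ and $\pi_1$, namely $\langle f,g\rangle\pi_0 = f$ and $\langle f,g\rangle\pi_1 = g$. So it suffices to show that the claimed right-hand side $\langle \D[f],\D[g]\rangle$ agrees with $\D[\langle f,g\rangle]$ after post-composition with each projection; equivalently, by the product property, it suffices to compute $\D[\langle f,g\rangle]\pi_i$ for $i=0,1$ using the chain rule and recognise it as $\D[f]$, respectively $\D[g]$.

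First, I apply \textbf{[CDC.5]} to the composite $f = \langle f,g\rangle\pi_0$:
\[
\D[f] \;=\; \D[\langle f,g\rangle \pi_0] \;=\; \langle \pi_0 \langle f,g\rangle,\, \D[\langle f,g\rangle]\rangle \D[\pi_0].
\]
By \textbf{[CDC.3]}, $\D[\pi_0] = \pi_1\pi_0$, and so post-composition with $\D[\pi_0]$ simply selects the second component of the pair and then its first component, yielding $\D[\langle f,g\rangle]\pi_0$. Hence $\D[f] = \D[\langle f,g\rangle]\pi_0$. The identical argument applied to $g = \langle f,g\rangle\pi_1$, together with $\D[\pi_1]=\pi_1\pi_1$ from \textbf{[CDC.3]}, gives $\D[g] = \D[\langle f,g\rangle]\pi_1$.

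Finally, by the uniqueness part of the universal property of the product $B\times C$,
\[
\D[\langle f,g\rangle] \;=\; \langle \D[\langle f,g\rangle]\pi_0,\, \D[\langle f,g\rangle]\pi_1\rangle \;=\; \langle \D[f], \D[g]\rangle,
\]
which is precisely \textbf{[CDC.4]}.

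There is no real obstacle here: the argument is entirely formal, relying only on chain rule bookkeeping and the fact that \textbf{[CDC.3]} fixes the derivatives of the projections. The only point that requires a little care is ensuring that the order-of-variables convention used in this paper (vector argument in the second component) matches the form of \textbf{[CDC.5]} as stated, so that $\D[\pi_0]=\pi_1\pi_0$ really does collapse the chain-rule expression to $\D[\langle f,g\rangle]\pi_0$ rather than to something involving $\pi_0\langle f,g\rangle$.
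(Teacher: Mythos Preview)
Your argument is correct: applying \textbf{[CDC.5]} to the composites $\langle f,g\rangle\pi_i$ and invoking \textbf{[CDC.3]} to simplify $\D[\pi_i]$ is exactly the standard derivation, and the universal property of the product then pins down $\D[\langle f,g\rangle]$. The paper itself does not supply a proof for this lemma, instead citing \cite[Lemma 2.8]{lemay2018tangent}; your proof is the expected one and matches the approach in that reference.
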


\begin{lemma}\label{lemma:cd6/7-lemma} \cite[Proposition 4.2]{journal:TangentCats} In a Cartesian left additive category:
\begin{enumerate}
\item \label{lemma:cd6-lemma} If a combinator $D$ satisfies {\bf [CDC.1-5,7]}, the axiom {\bf [CDC.6]} is equivalent to:
 $$\<1\x \pi_0,0\x \pi_1\>\D[\D[f]] = (1\x \pi_1)\D[f]$$
  \item \label{lemma:cd7-lemma} If a combinator $D$ satisfies 
  {\bf [CDC.1-6]}, the axiom {\bf [CDC.7]} is equivalent to 
  $$\ex \D[\D[f]] = \D[\D[f]]$$ 
where $\ex: (A\x B) \x (C\x D) \to (A\x C) \x (B\x D)$ is the {\bf exchange} natural isomorphism defined as $\ex := \<\pi_0 \x \pi_0, \pi_1\x \pi_1\>$. 
\end{enumerate}
\end{lemma}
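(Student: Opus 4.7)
Both items have the same logical shape: a family of equations indexed by arbitrary generalized elements $a,b,c,\ldots : X \to A$ is asserted equivalent to a single ``point-free'' equation between two parallel morphisms out of a fixed product. In any Cartesian category, parallel maps $\phi,\psi : X \to Y$ are equal if and only if $h\phi = h\psi$ for every $h : Z \to X$ (the non-trivial direction by $h := 1_X$). So in each case my strategy is to verify that pre-composing the point-free equation with the generic tuple of projections literally recovers the original axiom, and conversely that instantiating the axiom at the appropriate generic element reproduces the point-free equation.

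For part (1), I would first unfold $h\x k := \<\pi_0 h,\pi_1 k\>$, giving $1\x\pi_0 = \<\pi_0,\pi_1\pi_0\>$, $0\x\pi_1 = \<0,\pi_1\pi_1\>$ (using $\pi_0\cdot 0 = 0$, which follows from the left-additive identity $x\cdot 0 = 0$), and $1\x\pi_1 = \<\pi_0,\pi_1\pi_1\>$. Substituting $a := \pi_0$, $b := \pi_1\pi_0$, $c := \pi_1\pi_1$ into {\bf [CDC.6]} then yields exactly the proposed equation. Conversely, pre-composing the proposed equation with any $\<a,\<b,c\>\> : X \to A\x(A\x A)$ and expanding via the same identities recovers {\bf [CDC.6]}.

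For part (2), unfolding $\ex = \<\pi_0\x\pi_0,\pi_1\x\pi_1\>$ one computes $\<\<a,b\>,\<c,d\>\>\ex = \<\<a,c\>,\<b,d\>\>$, so {\bf [CDC.7]} can be rewritten as $\<\<a,b\>,\<c,d\>\>\D[\D[f]] = \<\<a,b\>,\<c,d\>\>\bigl(\ex\D[\D[f]]\bigr)$. The equivalence with $\ex\D[\D[f]] = \D[\D[f]]$ is then the same generic-element dichotomy, realized one way by taking $a := \pi_0\pi_0$, $b := \pi_1\pi_0$, $c := \pi_0\pi_1$, $d := \pi_1\pi_1$ (so that $\<\<a,b\>,\<c,d\>\>$ becomes the identity on $(A\x A)\x(A\x A)$), and the other way by pre-composing with an arbitrary $\<\<a,b\>,\<c,d\>\>$.

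There is no real obstacle: the whole proof is bookkeeping in the Cartesian left additive structure. In particular, the ambient hypotheses {\bf [CDC.1--5]} together with the complementary axiom (either {\bf [CDC.7]} for item~1 or {\bf [CDC.6]} for item~2) play no role in the argument itself; they only set the stage. The only details to watch are the correct expansion of the abbreviations $h\x k$, $\<f,g\>$ and $\ex$, and the single use of $\pi_0\cdot 0 = 0$ when simplifying $0\x\pi_1$.
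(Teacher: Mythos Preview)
The paper does not include its own proof of this lemma; it simply cites \cite[Proposition~4.2]{journal:TangentCats}. Your argument is correct and is the standard one: both equivalences are instances of the fact that in a Cartesian category two parallel maps are equal iff they agree after precomposition with every generalized element, and the point-free form is recovered by instantiating the axiom at the generic element (the identity, expressed via projections). Your observation that the ambient hypotheses {\bf [CDC.1--5]} and the complementary axiom play no role in establishing the equivalence is also correct.

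One small bookkeeping slip in part~(2): with the diagrammatic composition convention used in this paper, your choice $a := \pi_0\pi_0$, $b := \pi_1\pi_0$, $c := \pi_0\pi_1$, $d := \pi_1\pi_1$ makes $\<\<a,b\>,\<c,d\>\>$ equal to $\ex$, not to the identity as you claim. The substitution that yields the identity is $a := \pi_0\pi_0$, $b := \pi_0\pi_1$, $c := \pi_1\pi_0$, $d := \pi_1\pi_1$. Either route works: with the corrected substitution {\bf [CDC.7]} immediately becomes $\D[\D[f]] = \ex\,\D[\D[f]]$, while with your substitution it becomes $\ex\,\D[\D[f]] = \ex\,\ex\,\D[\D[f]]$ and one then invokes $\ex\,\ex = 1$.
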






In a Cartesian differential category, there are two important notions: that of partial derivatives and that of linear maps. Beginning with partial derivatives, 
if $A\x B \to^{f} C$ then the partial derivative of $f$ with respect to $B$ is defined as follows: 
$$D_B[f] := A\x (B\x B) \to^{\<1\x \pi_0,0\x \pi_1\>}(A\x B)\x (A\x B)\to^{\D[f]} C $$ 
This partial derivative definition induces a Cartesian differential category on the simple slice categories. Recall that the simple slice category of $\X$ with respect to $A$, denoted $\X[A]$, is the category with the same objects as $\X$
and where a map from $B \to C$ in $\X[A]$ is a map ${f: A \x B \to C}$ in $\X$; that is, in terms of homsets, $\X[A](B,C) = \X(A \x B, C)$, and composition of is given by $\<f,\pi_1\>g$.

\begin{proposition} \cite[Corollary 4.5.2]{journal:BCS:CDC}
  Let $\X$ be a Cartesian differential category and $A$ any object.
  Then $\X[A]$ is a Cartesian differential category and the derivative of 
  $f : A \times B\to C$ is $D_B[f]$.
\end{proposition}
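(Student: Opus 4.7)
The plan is to verify that $\X[A]$ is a Cartesian left additive category, and then to check each of \textbf{[CDC.1]}--\textbf{[CDC.7]} for the combinator $\widehat{\D}[\widehat{f}] := \widehat{D_B[f]}$, where I write $\widehat{f} : B \to C$ in $\X[A]$ for a map $f : A \times B \to C$ in $\X$.

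First I would verify the Cartesian left additive structure on $\X[A]$. Addition and zeros on $\X[A](B,C) = \X(A \times B, C)$ are inherited pointwise from $\X$; left additivity holds since precomposition by $\langle \pi_0, f \rangle$ (the $\X$-representative of slice-precomposition by $\widehat{f}$) preserves sums when $f$ is fixed, using that addition is pointwise. The product of $B$ and $C$ in $\X[A]$ is the $\X$-product $B \times C$, with slice-projections represented by $\pi_1\pi_0$ and $\pi_1\pi_1$ in $\X$; these are additive in $\X[A]$ because projections are additive in $\X$.

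Next I would check the first four axioms. \textbf{[CDC.1]} and \textbf{[CDC.2]} reduce immediately to their $\X$-versions after unfolding $D_B[f] = \langle 1 \times \pi_0, 0 \times \pi_1\rangle \D[f]$, since precomposition by the additive map $\langle 1\times \pi_0, 0\times \pi_1\rangle$ respects sums. \textbf{[CDC.3]} and \textbf{[CDC.4]} follow by direct calculation, noting that the slice-identity $\pi_1$ and slice-projections $\pi_1\pi_i$ are themselves of the form handled by \textbf{[CDC.3]} of $\X$, and that pairing in $\X[A]$ is pairing in $\X$. The chain rule \textbf{[CDC.5]} is the first substantive step: the slice composition of $\widehat{f} : B \to C$ with $\widehat{g} : C \to D$ has the form $\langle \pi_0, f\rangle g$ in $\X$; I would compute $D_B[\langle \pi_0, f\rangle g]$ using \textbf{[CDC.5]}, \textbf{[CDC.4]}, and \textbf{[CDC.3]} of $\X$, then recognise the result as the slice composition of $\widehat{D_B[f]}$ with $\widehat{D_B[g]}$, which requires careful manipulation of the zero in the $A$-component.

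Finally, for \textbf{[CDC.6]} and \textbf{[CDC.7]}, I would use Lemma \ref{lemma:cd6/7-lemma} to work with the equivalent formulations $\langle 1 \times \pi_0, 0 \times \pi_1\rangle \D[\D[f]] = (1\times \pi_1)\D[f]$ and $\ex\, \D[\D[f]] = \D[\D[f]]$. Unfolding the iterated partial derivative $D_{B \times B}[D_B[f]]$ (the slice second derivative of $\widehat f$) reduces each slice axiom to an instance of the corresponding ambient axiom, with the extra zeros in the $A$-component appearing from the partial-derivative definition lining up with the zeros already present in the ambient axioms. The main obstacle I expect is the combinatorial bookkeeping of variable positions in this last step: one must track how the zero maps in the $A$-coordinate interact with the exchange isomorphism $\ex$, and verify that the rearrangements are exactly what \textbf{[CDC.6]} and \textbf{[CDC.7]} of $\X$ provide.
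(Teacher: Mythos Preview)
Your approach is sound and is essentially the standard direct verification; the paper itself gives no proof here, merely citing \cite[Corollary 4.5.2]{journal:BCS:CDC}, so there is nothing to compare against beyond noting that the cited source proceeds along the same lines you describe. Your outline for \textbf{[CDC.5]} and the use of Lemma~\ref{lemma:cd6/7-lemma} for \textbf{[CDC.6]}--\textbf{[CDC.7]} are the right manoeuvres, and the bookkeeping you anticipate is indeed the only real work.
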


Linear maps play a central role in the theory of Cartesian differential categories. 

\begin{definition}\label{definition:linear}
A map $f$ in a Cartesian differential category is {\bf linear} when $\D[f] = \pi_1f$. Similarly, a map $A\x B \to^{f} C$ is {\bf linear in $B$} if the following diagram commutes:  
$$\begin{tikzcd}
    A\x (B\x B) \ar[dr,"1\x \pi_1"'] \ar[rr,"{D_B[f]}"] && C \\
     & A\x B \ar[ur,"f"']
  \end{tikzcd}  $$ 
\end{definition}

Note that a map $A \x B \to^{f}C$ is linear in $B$ if and only if when regarded as a map ${B \to^{f} C}$ in $\X[A]$, it is linear 
with respect to the derivative in $\X[A]$. 

\begin{example} Let us consider the linear maps in our examples of Cartesian differential categories from Example \ref{example:CDC}: 
\begin{enumerate}
\item In a category with finite biproducts, every map is linear by definition of the differential combinator. 
\item Let $R$ be a commutative rig. In $\mathsf{POLY}_R$, a map $n \to^p 1$ is linear if and only if $p(\vec x) = \sum \limits^n_{i=1} r_i x_i$ for some $r_i \in R$. And it follows that $n \to^{P} m$, with $P = \langle p_1(\vec x), \hdots, p_n(\vec x) \rangle$, is linear if and only if each $p_i(\vec x)$ is. In other words, $n \to^{P} m$ is linear in the Cartesian differential category sense if and only if it induces an $R$-linear map $R^n \to R^m$. 
\item Similar to the previous example, in $\smooth$ the linear maps in the Cartesian differential category sense are precisely the linear maps in the ordinary sense. Explicitly, $n \to^F m$ is linear if and only if $\mathbb{R}^n \to^F \mathbb{R}^m$ is a linear transformation. 
\end{enumerate}
\end{example}

For a Cartesian differential category $\X$, we can also form its subcategory of linear maps $\mathsf{Lin}(\X)$, and since every linear map is additive \cite{journal:BCS:CDC}, it follows that:

\begin{proposition}[\cite{journal:BCS:CDC}, Corollary 2.2.3]\label{prop:linearbiproducts}
For a Cartesian differential category $\X$, its subcategory of linear maps $\mathsf{Lin}(\X)$ has finite biproducts.
\end{proposition}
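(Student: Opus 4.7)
The plan is to verify that $\mathsf{Lin}(\X)$ inherits a Cartesian left additive structure from $\X$, and then invoke the already-cited fact that a Cartesian left additive category in which every map is additive has its finite products as finite biproducts. The quoted result that every linear map is additive does the final step for us, so the entire argument reduces to routine closure checks.

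First, I would check that $\mathsf{Lin}(\X)$ is genuinely a subcategory of $\X$. Identities are linear by {\bf [CDC.3]}. For closure under composition, if $f$ and $g$ are linear then by {\bf [CDC.5]} followed by the linearity of $f$ we get
\[
\D[fg] = \<\pi_0 f,\D[f]\>\D[g] = \<\pi_0 f,\pi_1 f\>\D[g] = \<\pi_0,\pi_1\>f\,\D[g] = \pi_1 f g,
\]
using linearity of $g$ in the last step; so $fg$ is linear.

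Next, I would lift the Cartesian structure. Projections $\pi_0,\pi_1$ are linear by {\bf [CDC.3]}. If $f,g$ are linear, then {\bf [CDC.4]} gives
\[
\D[\<f,g\>] = \<\D[f],\D[g]\> = \<\pi_1 f,\pi_1 g\> = \pi_1\<f,g\>,
\]
so pairings of linear maps are linear, which makes $\mathsf{Lin}(\X)$ have the same binary products as $\X$. The terminal object $\mathbf 1$ of $\X$ remains terminal in $\mathsf{Lin}(\X)$: the unique map $!_A: A \to \mathbf 1$ satisfies $\D[!_A] = \pi_1 !_A$ automatically since both sides land in the terminal object.

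Then I would lift the left additive structure. If $f,g$ are linear, {\bf [CDC.1]} and left additivity of composition give
\[
\D[f+g] = \D[f] + \D[g] = \pi_1 f + \pi_1 g = \pi_1(f+g),
\]
and similarly $\D[0] = 0 = \pi_1 \cdot 0$, so linear hom-sets are closed under the monoid operations, and composition on the left in $\mathsf{Lin}(\X)$ preserves this structure because it already did in $\X$. Since the projections are linear and hence additive, $\mathsf{Lin}(\X)$ is a Cartesian left additive category.

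Finally, since every map in $\mathsf{Lin}(\X)$ is linear, and linear maps are additive, every map in $\mathsf{Lin}(\X)$ is additive. By the first item in the earlier list of examples (a Cartesian left additive category in which every map is additive has its finite products as finite biproducts), $\mathsf{Lin}(\X)$ has finite biproducts. No step is a real obstacle; the only mildly subtle point is being careful that the closure checks use only linearity of the inputs and the axioms {\bf [CDC.1]}, {\bf [CDC.3]}, {\bf [CDC.4]}, {\bf [CDC.5]}, rather than invoking any property of $\mathsf{Lin}(\X)$ that has not yet been established.
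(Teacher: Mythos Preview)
The paper does not actually prove this proposition; it is stated as a citation of \cite[Corollary 2.2.3]{journal:BCS:CDC}. Your argument is the standard one and is correct in outline: verify that identities, projections, pairings, the terminal map, sums, and zeros are all linear, so that $\mathsf{Lin}(\X)$ is a Cartesian left additive subcategory; then use the cited fact that linear maps are additive together with \cite[Proposition 1.2.2]{journal:BCS:CDC} to conclude biproducts.

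One small slip: in your composition check you write $\<\pi_0 f,\pi_1 f\>\D[g] = \<\pi_0,\pi_1\>f\,\D[g]$, but $\<\pi_0,\pi_1\>$ is the identity on $A\times A$ and does not compose with $f:A\to B$. What you mean is $\<\pi_0 f,\pi_1 f\> = (f\times f)$, and then $(f\times f)\D[g] = (f\times f)\pi_1 g = \pi_1 f g$ by linearity of $g$. With that corrected, the argument is complete.
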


Finally, we conclude this section with the observation that linearity can also be expressed in terms of injection maps: 

\begin{lemma}\label{lemma:linear-equiv-both} In a Cartesian differential category, 
\begin{itemize}
\item \label{lemma:linear-equiv}
  A map $A \to^{f} B$ is linear if and only if $\iota_1 \D[f] = f$.
  \item \label{lemma:partial-linear-equiv}  A map $A\x B\to^{f} C$ is linear in $B$ if and only if $(\iota_0 \times \iota_1) \D[f] = f$. 
\end{itemize}
\end{lemma}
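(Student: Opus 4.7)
The plan is to handle each part by showing the forward direction as a one-line consequence of the basic identities $\iota_1 \pi_1 = 1$ (and its analogue $(\iota_0 \times \iota_1)(1 \times \pi_1) = 1_{A \times B}$ together with the definition of $D_B[f]$), while the reverse direction is the substantive one, relying on the chain rule {\bf [CDC.5]} combined with {\bf [CDC.6]} to reconstruct the derivative from its restriction along $\iota_1$.

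For the first part, the forward direction is immediate: $\iota_1 \D[f] = \iota_1 \pi_1 f = f$. For the converse, the plan is to apply $\D$ to both sides of $\iota_1 \D[f] = f$. Since $\iota_1 = \langle 0, 1 \rangle$, applying {\bf [CDC.1]}, {\bf [CDC.3]}, and {\bf [CDC.4]} gives $\D[\iota_1] = \langle 0, \pi_1 \rangle = \pi_1 \iota_1$, so $\iota_1$ is linear. Expanding $\D[\iota_1 \D[f]]$ via the chain rule then yields $\D[f] = \langle \pi_0 \iota_1, \pi_1 \iota_1 \rangle \D[\D[f]]$, and since $\pi_0 \iota_1 = \langle 0, \pi_0 \rangle$ and $\pi_1 \iota_1 = \langle 0, \pi_1 \rangle$, this puts the left-hand composite in the exact form required by {\bf [CDC.6]} (which demands a zero in the first component of the second pair). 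Applying {\bf [CDC.6]} collapses the expression to $\langle 0, \pi_1 \rangle \D[f] = \pi_1 \iota_1 \D[f]$, and re-applying the hypothesis gives $\D[f] = \pi_1 f$, which is linearity.

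For the second part, the cleanest route is to appeal to the fact, already recorded in the excerpt, that $\X[A]$ is a Cartesian differential category in which the derivative of $f : A \times B \to C$ is $D_B[f]$, and that $f$ is linear in $B$ precisely when it is linear in $\X[A]$. Applying the first part inside $\X[A]$ reduces the question to showing that $\iota_1^{\X[A]} \bind D_B[f] = f$ in $\X[A]$ translates to $(\iota_0 \times \iota_1) \D[f] = f$ in $\X$, which is a direct unpacking: the identity-on-$B$ map of $\X[A]$ is $\pi_1 : A \times B \to B$, so $\iota_1^{\X[A]}$ corresponds to $\langle 0, \pi_1 \rangle$, and composing it with $D_B[f] = \langle 1 \times \pi_0, 0 \times \pi_1 \rangle \D[f]$ through the simple-slice composition rule gives exactly $(\iota_0 \times \iota_1) \D[f]$. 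Alternatively, the argument of Part 1 can be run verbatim with $\iota_0 \times \iota_1$ in place of $\iota_1$, using that $\iota_0 \times \iota_1$ is linear and that the relevant instance of {\bf [CDC.6]} still applies after expanding $D_B[f]$.

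The main obstacle I anticipate is bookkeeping: both directions require careful tracking of projections and injections so that the outer pair passed to $\D[\D[f]]$ lands in the precise shape $\langle \langle -, - \rangle, \langle 0, - \rangle \rangle$ demanded by {\bf [CDC.6]}. Once that shape is achieved, the axiom does all the work, and the hypothesis is invoked one more time to close the loop.
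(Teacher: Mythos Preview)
The paper states this lemma without proof, so there is no argument to compare against directly. Your proposal is correct. For Part~1 the forward direction is the one-liner you give, and your converse---differentiate both sides of $\iota_1\D[f]=f$, use that $\iota_1$ is linear to get $\langle\langle 0,\pi_0\rangle,\langle 0,\pi_1\rangle\rangle\D[\D[f]]$, then apply {\bf [CDC.6]} with $a=0$, $b=\pi_0$, $c=\pi_1$ and re-invoke the hypothesis---is exactly the right mechanism. For Part~2 your reduction to Part~1 inside $\X[A]$ is clean and is precisely what the paper's remark (just below Definition~\ref{definition:linear}) invites; your computation that composing $\iota_1^{\X[A]}=\langle 0,\pi_1\rangle$ with $D_B[f]$ via the simple-slice composition unwinds to $(\iota_0\times\iota_1)\D[f]$ is correct. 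The only cosmetic point: in your sketch of the forward direction of Part~2 you write ``$(\iota_0\times\iota_1)(1\times\pi_1)=1_{A\times B}$'', which is not literally well-typed; what you actually need (and what your slice argument delivers) is $\langle\pi_0,\langle 0,\pi_1\rangle\rangle(1\times\pi_1)=1_{A\times B}$, after factoring $(\iota_0\times\iota_1)\D[f]$ through $D_B[f]$.
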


\section{Reverse derivatives}\label{section:reverse-der}

In this section we introduce our definition of a Cartesian reverse differential category.  The types of axioms are similar to those for Cartesian differential categories; however, after the first two, the forms the axioms take are quite different.  

\begin{definition}
  A Cartesian left additive category $\X$ has {\bf reverse derivatives} in case 
  there is a combinator $\R$, called the {\bf reverse differential combinator}, which written as an inference rule is given by:
 $$   \infer{A\x B\to_{\R[f]} A}{A \to^{f} B}  $$ 
where $\R[f]$ is called the reverse derivative of $f$, and such that the following coherences are satisfied: \\
\noindent 
 {\bf [RD.1]}   $\R[f+g] = \R[f] + \R[g]$ and $\R[0]=0$; \\
\noindent  
{\bf [RD.2]} $\<a,b+c\>\R[f] = \<a,b\>\R[f] + \<a,c\>\R[f]$ and  $\<a,0\>\R[f] = 0$ \\
{\bf [RD.3]} 
    $\R[1] = \pi_1$, while for the projections, the following diagrams commute:
$$ \infer{(A\x B) \x A \to_{\R[\pi_0]} A\x B}{A\x B \to^{\pi_0} A} \quad \quad 
              \begin{tikzcd}
                (A\x B) \x A \ar[dr,"\pi_1"']\ar[rr,"{\R[\pi_0]}"] && A \x B \\
                  & A \ar[ur,"{\iota_0}"']
              \end{tikzcd} $$
$$ \infer{(A\x B) \x B \to_{\R[\pi_1]} A\x B}{A\x B \to^{\pi_1} B} \quad \quad 
              \begin{tikzcd}
                (A\x B) \x B \ar[dr,"\pi_1"']\ar[rr,"{\R[\pi_1]}"] && A \x B \\
                  & B \ar[ur,"{\iota_1}"']
              \end{tikzcd} $$
\noindent {\bf [RD.4]} 
   For a tupling of maps $f$ and $g$, the following equality holds:      
$$ \infer{A\x B \to_{\R[f]} A}{A\to^{f} B} \qquad \infer{A\x C \to_{\R[g]} A}{A\to^{g} C} \qquad \infer{A \x (B\x C) \to_{\R[\<f,g\>]} A}{A \to^{\<f,g\>} B\x C} $$
$$\R[\<f,g\>] = (1\x \pi_0)\R[f]+ (1\x \pi_1)\R[g]$$
While for the unique map to the terminal object: $!_A: A \to \mathbf{1}$, the following equality holds: 
$$\R[!_A] = 0$$
\noindent {\bf [RD.5]} 
For composable maps $f$ and $g$, the following diagram commutes:
$$         \infer{A\x B \to_{\R[f]}A}{A\to^{f}B} 
             \qquad \infer{B\x C \to_{\R[g]} B}{B\to^{g} C}  
             \qquad \infer{A\x C \to_{\R[fg]} A}{A\to^{fg} C}$$
             $$         \begin{tikzcd}[column sep=4em]
              A \x C \ar[r,"{\R[fg]}"] \ar[d,"{\<\pi_0,\<\pi_0f,\pi_1\>\>}"']& A\\
              A\x (B\x C) \ar[r,"{1\x \R[g]}"'] & A\x B \ar[u,"{\R[f]}"']
             \end{tikzcd}$$
\noindent {\bf [RD.6]} 
$ \<1\x \pi_0,0\x \pi_1\>(\iota_0 \x 1)\R[\R[\R[f]]]\pi_1 = (1\x \pi_1)\R[f] $  \\ 
\noindent {\bf [RD.7]} 
$(\iota_0 \x 1)\R[\R[(\iota_0 \x 1)\R[\R[f]]\pi_1]]\pi_1  = \ex (\iota_0 \x 1)\R[\R[(\iota_0 \x 1)\R[\R[f]]\pi_1]]\pi_1$ \\ \\
\noindent A {\bf Cartesian reverse differential category} is a Cartesian
 left additive category with a reverse differential combinator. 
\end{definition}

The axioms of the reverse differential combinator mirror those of a differential combinator. {\bf [RD.1]} states that the reverse derivative of a sum is the sum of the reverse derivatives while {\bf [RD.2]} says that the reverse derivative is additive in its second argument. {\bf [RD.3]} and {\bf [RD.4]} respectively explain what the reverse derivatives of the identity, projection, and tuples are. {\bf [RD.5]} is the reverse derivative version of the chain rule. Lastly, {\bf [RD.6]} expresses that the reverse derivative is linear in its second argument and {\bf [RD.7]} gives the symmetry of the mixed partial reverse derivatives. 

\begin{example}\label{example:rdc} Here are some examples of reverse differential categories: 
\begin{enumerate}
\item Let $R$ be a commutative rig. $\mathsf{POLY}_R$ is a reverse differential category whose reverse differential combinator $\R$ is again defined using partial derivatives of polynomials. For a map $n \to^P m$, $P := \langle p_1(\vec x), \hdots, p_m(\vec x) \rangle$ with $p_i(\vec x) \in R[x_1, \hdots, x_n]$, its reverse derivative $n \times m \to^{\R[P]} n$ is the tuple:
$$\R[P] := \langle \sum \limits^m_{i=1} \frac{\partial p_i}{\partial x_1}(\vec x) y_i , \hdots, \sum \limits^m_{i=1} \frac{\partial p_i}{\partial x_n}(\vec x) y_i \rangle$$
where each component of $\R[P]$ is a polynomial in $R[x_1, \hdots, x_n, y_1, \hdots, y_m]$. For example, consider from before the polynomial $p(x_1,x_2) = x_1^2 + 3x_1x_2 + 5x_2$, then ${3 \to^{\R[p]} 2}$ is the tuple of polynomials in $3$ variables, $\R[p] = \langle (2x_1 + 3x_2)y, (3x_1 + 5)y \rangle$.
\item $\smooth$ is a reverse differential category whose reverse differential combinator is defined using the transpose of the Jacobian. For a map $n \to^F m$, that is, a smooth function ${\mathbb{R}^n \to^F \mathbb{R}^m}$, its reverse derivative $n \times m \to^{\R[F]} n$ is the smooth map $\mathbb{R}^n \times \mathbb{R}^m \to^{\R[F]} \mathbb{R}^n$ defined as:
$$\R[F](\vec x, \vec y) := (J_f(x))^T \cdot \vec y$$ 
In particular for a smooth map $\mathbb{R}^n \to^f \mathbb{R}$, its reverse derivative $\mathbb{R}^n \times \mathbb{R}  \to^{\R[f]} \mathbb{R}^n$ is calculated out to be:
$$\R[f](\vec x, y) := \langle \frac{\partial f}{\partial x_1}(\vec x)y, \hdots, \frac{\partial f}{\partial x_n}(\vec x)y \rangle$$
And as before, $\mathsf{POLY}_\mathbb{R}$ is a sub-reverse differential category of $\smooth$. 
\end{enumerate}
\end{example}

The following lemma captures some basic properties of the reverse derivative.
\begin{lemma}\label{lemma:reverse-der-basic} In a Cartesian reverse differential category, the following equalities holds: 
  \begin{enumerate}
    \item $\R[fg] = \<\pi_0,\<\pi_0f,\pi_1\>\R[g]\>\R[f]$;
    \item $\R[\iota_0] = \pi_1\pi_0$ and $\R[\iota_1]=\pi_1\pi_1$;
    \item $\R[\pi_0f] = (\pi_0 \x1)\R[f]\iota_0$ and $\R[\pi_1f]=(\pi_1 \x 1)\R[f]\iota_1$;
    \item $\R[f\pi_0] = (1\x \iota_0)\R[f]$ and $\R[f\pi_1] = (1\x \iota_1)\R[f]$;
    \item $\R[f\x g] = \ex (\R[f] \x \R[g])$;
    \item $\R[\iota_0 f] = (\iota_0 \x 1)\R[f] \pi_0$ and $\R[\iota_1 f] = (\iota_1\x 1)\R[f] \pi_1$;
    \item $\R[f\iota_0] = (1\x \pi_0)\R[f] $ and $\R[f\iota_1] = (1\x \pi_1)\R[f]$;
    \item $\R[\<f|g\>] = \<\D[f\iota_0]|\R[g\iota_1]\>$;
    \item $\R[f\oplus g] = \ex(\R[f] \x \R[g])$;
  \end{enumerate}
\end{lemma}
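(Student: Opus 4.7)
The plan is to derive these identities from the axioms \textbf{[RD.1]}--\textbf{[RD.5]} in order, so that each part can serve as a lemma for the later ones. Most reduce to a single axiom application together with elementary manipulation in the ambient Cartesian left additive category (in particular Lemma~\ref{lemma:clac-simple} and the usual Cartesian identities for $\pi_i, \iota_i$).

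I would begin with items 1 and 2. Item 1 is simply \textbf{[RD.5]} rewritten by absorbing $1 \x \R[g]$ into the second component of the pairing: $\<\pi_0, \<\pi_0 f, \pi_1\>\>(1 \x \R[g]) = \<\pi_0, \<\pi_0 f, \pi_1\>\R[g]\>$. Item 2 follows by observing that $\iota_0 = \<1, 0\>$ and $\iota_1 = \<0, 1\>$, applying \textbf{[RD.4]} to these pairings, and using $\R[1] = \pi_1$ from \textbf{[RD.3]} and $\R[0] = 0$ from \textbf{[RD.1]}; the surviving composite $(1 \x \pi_0)\pi_1$ collapses to $\pi_1\pi_0$ by standard product reasoning, and the other summand vanishes by left additivity.

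Items 3, 4, 6, 7 then follow uniformly by applying the chain rule \textbf{[RD.5]} to the given composite ($\pi_i f$, $f\pi_i$, $\iota_i f$, or $f\iota_i$) and substituting the formulas for $\R[\pi_i]$ from \textbf{[RD.3]} or $\R[\iota_i]$ from item 2. In each case the outer pairing $\<\pi_0, \<\ldots\>\R[\cdot]\>$ collapses because the final projection or injection factor selects a single component of the pair. Item 5 then comes by expanding $f \x g = \<\pi_0 f, \pi_1 g\>$, applying \textbf{[RD.4]}, using item 3 on each summand, and matching the resulting sum with $\ex(\R[f] \x \R[g])$ by unfolding $\ex = \<\pi_0 \x \pi_0, \pi_1 \x \pi_1\>$ and invoking $h\iota_0 + k\iota_1 = \<h, k\>$ from Lemma~\ref{lemma:clac-simple}. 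Item 8 uses the same decomposition $\<f|g\> = \pi_0 f + \pi_1 g$ together with \textbf{[RD.1]} and item 3, and item 9 is immediate from $f \oplus g = f \x g$ (Lemma~\ref{lemma:clac-simple}) and item 5.

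The main obstacle will be bookkeeping: tracking exactly which $\pi_i$ and $\iota_i$ act on which factor across nested products. The computations themselves are shallow — no step requires \textbf{[RD.6]} or \textbf{[RD.7]} — but items 5 and 8 each rearrange a sum of two morphisms with nontrivial typing, and careful use of the identification between sums of $\iota_i$-composites and pairings is essential to match the stated right-hand sides.
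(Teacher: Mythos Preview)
Your plan is correct and matches the paper's proof almost exactly: items 1--6, 8, 9 use the same axioms in the same way. The only difference is item 7, where the paper instead writes $f\iota_0 = \<f,0\>$ and applies \textbf{[RD.4]} directly; your chain-rule route via $\R[\iota_0] = \pi_1\pi_0$ works just as well and is equally short.
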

\begin{proof}
    We have the following calculations.
    \begin{enumerate}
      \item Immediate.  
      \item $\R[\iota_0] = \R[\<1,0\>] = (1\x \pi_0)\D[1] = (1\x \pi_1)\D[0] = (1\x \pi_0)\pi_1 = \pi_1\pi_0$.
            Similarly $\R[\iota_1] = \pi_1\pi_1$.
      \item We have 
            \begin{align*}
              \R[\pi_0 f] &= \<\pi_0,\<\pi_0\pi_0,\pi_1\>\R[f]\>\R[\pi_0] \\
                          &= \<\pi_0,\<\pi_0\pi_0,\pi_1\>\R[f]\>\pi_1 \iota_0\\
                          &= \<\pi_0,\pi_0,\pi_1\>\R[f]\iota_0.
            \end{align*}
            Similarly, $\R[\pi_1f] = \<\pi_0\pi_1,\pi_1\>\R[f]\iota_1$.
      \item We have 
            \begin{align*}
              \R[f\pi_0] &= \<\pi_0,\<\pi_0f,\pi_1\>\R[\pi_0]\>\R[f] \\
                         &= \<\pi_0,\<\pi_0f,\pi_1\>\pi_1 \iota_0\>\R[f]\\
                         &= \<\pi_0,\pi_1\iota_0\>\R[f] = (1\x \iota_0)\R[f]
            \end{align*}
            Similarly, $\R[f\pi_1] = (1\x \iota_1)\R[f]$.
      \item We have 
            \begin{align*}
               \R[f\x g] &= \R[\<\pi_0f,\pi_1g\>] \\
                       &= (1\x \pi_0)\R[\pi_0 f] + (1\x \pi_1)\R[\pi_1 g]\\
                       &= (1\x \pi_0)(\pi_0 \x 1)\R[f]\iota_0 + (1\x \pi_1)(\pi_1 \x 1)\R[g]\iota_1\\
                       &= \<(\pi_0 \x \pi_0)\R[f],(\pi_1\x \pi_1)\R[g]\> \\
                       &= \ex \<\pi_0 \R[f],\pi_1 \R[g]\> = \ex (\R[f] + \R[g]).
            \end{align*}
      \item We have 
            \begin{align*}
              \R[\iota_0 f]
                &= \<\pi_0,\<\pi_0\iota_0,\pi_1\>\R[f]\>\R[\iota_0] \\
                &= \<\pi_0,\<\pi_0,\iota_0,\pi_1\>\R[f]\>\pi_1\pi_0 \\
                &= \<\pi_0\iota_0,\pi_1\>\R[f]\pi_0 = (\iota_0 \x 1)\R[f]\pi_0
            \end{align*}
            Similarly, $\R[\iota_1f] = (\iota_1 \x 1)f \pi_1$.
      \item We have $\D[f\iota_0] = \D[\<f,0\>] = (1\x \pi_0) \R[f] + 0 = (1\x \pi_0)\R[f]$.  Similarly, $\R[f\iota_1] = (1\x \pi_1)\R[f]$.
      \item We have 
            \begin{align*}
              \R[\<f|g\>] &= \R[\pi_0f + \pi_1 g]\\
                          &= \R[\pi_0f] + \R[\pi_1g] \\
                          &= (\pi_0 \x 1)\R[f]\iota_0 + (\pi_1\x 1)\R[g]\iota_1\\
                          &= \<(\pi_0 \x 1)\R[f],(\pi_1\x 1)\R[g]\> \\
                          &= \<\R[f\iota_0],\R[g\iota_1]\>
            \end{align*}
      \item Immediate.
    \end{enumerate}
  \end{proof}

\subsection{Forward Differential Structure} 

Here we explain how every reverse derivative operator induces a forward derivative 
operator, that is, how every Cartesian reverse differential category is a Cartesian 
differential category.  The trick was noticed in \cite{chr12}: the
reverse derivative in $\smooth$ is the transpose of the Jacobian, which is linear,
hence applying the reverse derivative again allows one to reconstruct the forward derivative.
We formalize this in an arbitrary Cartesian reverse differential category 
as follows.  Consider the resulting type of applying the reverse 
differential combinator twice: 
 $$   \infer{
      \infer{
        (A\x B)\x A \to_{\R[\R[f]]} (A\x B)
      }{A\x B \to^{\R[f]} A}
    }{A \to^{f} B} $$ 

\begin{theorem}\label{theorem:cdc-from-rdc} If $\X$ is a Cartesian reverse differential category, then $\X$ is a Cartesian differential category with differential combinator $\D$ defined as follows (for any map ${A \to^f B}$): 
$$ \D[f] := A \times A \to^{(\<1,0\> \x 1)} (A \times B) \times A \to^{\R[\R[f]]} A \times B \to^{\pi_1} B $$
\end{theorem}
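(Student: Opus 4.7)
The plan is to verify each of the seven axioms [CDC.1]--[CDC.7] for the proposed combinator $\D[f] = (\iota_0 \times 1)\R[\R[f]]\pi_1$, leaning on the reverse axioms [RD.1]--[RD.7] together with the computational identities gathered in Lemma~\ref{lemma:reverse-der-basic}. My guiding observation is that [RD.6] and [RD.7] have been designed to encode, via this very formula for $\D$, the alternative reformulations of [CDC.6] and [CDC.7] supplied by Lemma~\ref{lemma:cd6/7-lemma}; so the substantive work concentrates on [CDC.1]--[CDC.5], with [CDC.6]--[CDC.7] falling out of [RD.6]--[RD.7] by essentially direct inspection.

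For [CDC.1] and [CDC.2] I would apply [RD.1] and [RD.2] twice each, noting that precomposition by $\iota_0 \times 1$ and postcomposition by $\pi_1$ are additive and hence preserve the relevant sums and zeros. For [CDC.3], I would compute $\R[\R[1]]$, $\R[\R[\pi_0]]$, and $\R[\R[\pi_1]]$ directly from [RD.3], using the formulas $\R[\iota_i] = \pi_1\pi_i$ and $\R[\pi_i f] = (\pi_i \times 1)\R[f]\iota_i$ from Lemma~\ref{lemma:reverse-der-basic}; each is a short calculation. Axiom [CDC.4] can then be derived either directly from [RD.4] or, more cheaply, by appealing to Lemma~\ref{lemma:cd4-lemma} once the remaining axioms are in hand.

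The main obstacle is [CDC.5], the chain rule. I would start from part~1 of Lemma~\ref{lemma:reverse-der-basic}, $\R[fg] = \langle \pi_0, \langle \pi_0 f, \pi_1\rangle \R[g]\rangle \R[f]$, and then compute $\R$ of the entire right-hand side. The unfolding proceeds by iterating the tuple, projection, and injection formulas of Lemma~\ref{lemma:reverse-der-basic} together with [RD.5], splitting intermediate sums via [RD.2] where needed; after precomposing with $\iota_0 \times 1$ and projecting with $\pi_1$ the expression should collapse to $\langle \pi_0 f, \D[f]\rangle \D[g]$. The principal difficulty here is bookkeeping: the intermediate reverse-derivative expressions live in deeply nested product types, and one must carefully track which component each operation is acting on.

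For the last two axioms, observe that the subterm $(\iota_0 \times 1)\R[\R[f]]\pi_1$ appearing inside both [RD.6] and [RD.7] is literally $\D[f]$ by definition. Consequently [RD.7] reads $\D[\D[f]] = \ex\,\D[\D[f]]$, which is [CDC.7] by part~2 of Lemma~\ref{lemma:cd6/7-lemma}. Similarly, the inner subterm $(\iota_0 \times 1)\R[\R[\R[f]]]\pi_1$ in [RD.6] is exactly $\D[\R[f]]$, so [RD.6] states $\langle 1 \times \pi_0, 0 \times \pi_1\rangle \D[\R[f]] = (1 \times \pi_1)\R[f]$---the alternative form of [CDC.6] from Lemma~\ref{lemma:cd6/7-lemma}(1) applied to $\R[f]$. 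Transferring this to $\D[f]$ itself is the one nontrivial manoeuvre in this block: instantiating [RD.6] with $f$ replaced by $\R[f]$ gives the same linearity statement for $\R[\R[f]]$, and combining with the chain rule [CDC.5] together with the additive maps $\iota_0 \times 1$ and $\pi_1$ lifts it to the required property of $\D[f]$.
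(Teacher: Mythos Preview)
Your proposal is correct and follows essentially the same route as the paper: axioms [CDC.1]--[CDC.4] are handled just as you describe, [CDC.5] is obtained by the same iterated unfolding of $\R[\R[fg]]$ via Lemma~\ref{lemma:reverse-der-basic} and [RD.5], and for [CDC.6]--[CDC.7] the paper likewise reads [RD.6]--[RD.7] as statements about $\D[\R[f]]$ and $\D[\D[f]]$ respectively. For [CDC.6] the paper packages the transfer step as a small auxiliary identity $\D[gfk] = (g\times g)\D[f]k$ for linear $g,k$ (derived from the already-established [CDC.5]) and applies it with $h=\R[f]$, which is exactly your ``instantiate [RD.6] at $\R[f]$ and lift through the linear pre/post-compositions $\iota_0\times 1$ and $\pi_1$'' manoeuvre spelled out in slightly more generality.
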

\begin{proof}
    We will show all of the axioms for a Cartesian differential category 
    hold. 
    
    \noindent {\bf [CDC.1]}    \begin{align*}
          \D[f+g] &= (\<1,0\>\x 1)\R[\R[f+g]]  \pi_1 \\
                 &= (\<1,0\>\x 1)(\R[\R[f]] + \R[\R[g]])\pi_1\\
                 &=(\<1,0\>\x 1)(\R[\R[f]]\pi_1 + (\<1,0\>\x 1)\R[\R[g]]\pi_1  \\
                 &= \D[f] + \D[g]            
        \end{align*}
        Similarly, $\D[0] = 0$.
  
    \noindent {\bf [CDC.2]}  
        \begin{align*}
          \<a,b+c\>\D[f] &= \<a,b+c\>(\<1,0\> \x 1)\R[\R[f]]\pi_1 \\
                        &= \<a\<1,0\>,b+c\>\R[\R[f]]\pi_1\\
                        &= \<a\<1,0\>,b\>\R[\R[f]]\pi_1 + \<a\<1,0\>,c\>\R[\R[f]]\pi_1\\
                        &= \<a,b\>\R[\R[f]] + \<a,c\>\R[\R[f]]
        \end{align*}
        Similarly, $\<a,0\>\D[f] = 0$.
  
    \noindent {\bf [CDC.3]}  
        \begin{align*}
          \D[1] &= (\<1,0\>\x 1)\R[\R[1]]\pi_1\\
               &= (\<1,0\> \x 1)\R[\pi_1]\pi_1\\
               &= (\<1,0\> \x 1)\pi_1\<0,1\>\pi_1\\
               &= \pi_1
        \end{align*}
        \begin{align*}
          \D[\pi_1] 
            &= (\<1,0\>\x 1)\R[\R[\pi_1]]\pi_1 \\
            &= (\<1,0\> \x 1)\R[\pi_1\iota_1]\pi_1\\
            &= (\<1,0\> \x 1)(\pi_1\x \pi_1) \D[\iota_1]\iota_1\pi_1\\
            &= \pi_1\pi_1\iota_1\pi_1 = \pi_1\pi_1
        \end{align*}
        Similarly, $\D[\pi_0] = \pi_1\pi_0$.
  
    \noindent {\bf [CDC.4]}   Immediate from Lemma \ref{lemma:cd4-lemma}.  
    
    \noindent {\bf [CDC.5]}  
        Our goal is to show that $\D[fg] = \<\pi_0f,\D[f]\>\D[g]$.
        First, consider 
        \begin{align*}
           \<\pi_0f,\D[f]\>\D[g] 
           &= \<\pi_0f,(\<1,0\> \x 1)\RR[f]\pi_1\>(\<1,0\> \x 1)\RR[g]\pi_1 \\
           &=  \<\pi_0 f\<1,0\>,(\<1,0\> \x 1)\RR[f]\pi_1\>\RR[g]\pi_1
        \end{align*}
        Next:
        \begin{align*}
          & \D[fg] \\
          &= (\<1,0\> \x 1)\RR[fg]\pi_1 \\
          &= (\<1,0\> \x 1)\R[\<\pi_0,\<\pi_0f,\pi_1\>\R[g]\>\R[f]]\pi_1\\
          &= (\<1,0\> \x 1) \<\pi_0,\<\pi_0\<\pi_0,\<\pi_0f,\pi_1\>\R[g]\>,\pi_1\>\RR[f]\> q \\
          &= \<\<\pi_0,0\>,\<\<\pi_0,0\>\<\pi_0,\<\pi_0f,\pi_1\>\R[g]\>,\pi_1\>\RR[f]\>q \\
          &= \<\<\pi_0,0\>,\<\<\pi_0,\<\pi_0,0\>\<\pi_0f,\pi_1\>\R[g]\>,\pi_1\>\RR[f]\>q \\
          &= \<\<\pi_0,0\>,\<\<\pi_0,\<\pi_0f,0\>\R[g]\>,\pi_1\>\RR[f]\>q\\
          &= \<\<\pi_0,0\>,\<\<\pi_0,0\>,\pi_1\>\RR[f]\>q \tag*{\bf [RD.2]} 
        \end{align*}
        where $q = \R[\<\pi_0,\<\pi_0f,\pi_1\>\R[g]\>]\pi_1$.
        Next we simplify $q$.
        \begin{align*}
          & q 
          = \R[\<\pi_0,\<\pi_0f,\pi_1\>\R[g]\>]\pi_1 \\
          &= \left((1\x \pi_0)\R[\pi_0] + (1\x \pi_1)\R[\<\pi_0f,\pi_1\>\R[g]] \right)\pi_1 \tag*{\bf [RD.4]}\\
          &= ((1\x \pi_0)\pi_1\iota_1 + 
              (1\x \pi_1)\<\pi_0,\<\pi_0\<\pi_0f,\pi_1\>,\pi_1\>\RR[g]\>\R[\<\pi_0f,\pi_1\>])\pi_1 \tag*{\bf [RD.5]}\\
          &= (\pi_1\pi_0\iota_0
          + 
          (1\x \pi_1)\<\pi_0,\<\pi_0\<\pi_0f,\pi_1\>,\pi_1\>\RR[g]\>
          ( (1\x \pi_0)\R[\pi_0 f] + (1\x \pi_1)\R[\pi_1] ))\pi_1\\
          &= \pi_1\pi_0\iota_0\pi_1\\
          &+ 
          (1\x \pi_1)\<\pi_0,\<\pi_0\<\pi_0f,\pi_1\>,\pi_1\>\RR[g]\>
          ( (1\x \pi_0)(\pi_0 \x 1)\R[f]\iota_0 + (1\x \pi_1)\pi_1\iota_1  ) \pi_1\\
          &= 0 +
          (1\x \pi_1)\<\pi_0,\<\pi_0\<\pi_0f,\pi_1\>,\pi_1\>\RR[g]\>
          ( \<(\pi_0\x \pi_0)\RR[f], \pi_1\pi_1\> )\pi_1 \\
          &=
          (1\x \pi_1)\<\pi_0,\<\pi_0\<\pi_0f,\pi_1\>,\pi_1\>\RR[g]\>
          \pi_1\pi_1\\
          &= (1\x \pi_1)\<\pi_0\<\pi_0f,\pi_1\>,\pi_1\>\RR[g]\pi_1
        \end{align*}
        Then we plug $q$ back into the  formula for $\D[fg]$ and continue simplifying.
        \begin{align*}
          & \<\<\pi_0,0\>,\<\<\pi_0,0\>,\pi_1\>\RR[f]\>q \\
          &= \<\<\pi_0,0\>,\<\<\pi_0,0\>,\pi_1\>\RR[f]\> (1\x \pi_1)\<\pi_0\<\pi_0f,\pi_1\>,\pi_1\>\RR[g]\pi_1 \\
          &= \<\<\pi_0,0\>,\<\<\pi_0,0\>,\pi_1\>\RR[f]\pi_1\>\<\pi_0\<\pi_0f,\pi_1\>,\pi_1\>\RR[g]\pi_1\\
          &= \<\<\pi_0,0\>\<\pi_0f,\pi_1\>,\<\<\pi_0,0\>,\pi_1\>\RR[f]\pi_1\>\RR[g]\pi_1\\
          &= \<\<\pi_0f,0\>,(\<1,0\> \x 1)\RR[f]\pi_1\>\RR[g]\pi_1\\
          &= \<\pi_0f,\D[f]\>\D[g]
        \end{align*}
        as desired.
  
    \noindent {\bf [CDC.6]}  Note that with the definition of the forward derivative introduced here,
          {\bf [RD.6]} is the same as $\<1\x \pi_0,0\x \pi_1\>\D[\R[f]] = (1\x \pi_1)\R[f]$. First, we will show a more general claim than needed for this point.
          Note that we have already shown that {\bf [CDC.5]} holds: thus, we have
          that if $\D[g]=\pi_1 g$ and $\D[k] = \pi_1 k$ then:
          \[
            \D[gfk] = (g\x g) \D[f]k  
          \]
          for any $f$.  The proof is straightforward:
          \begin{align*}
            \D[gfk] &= \<\pi_0 g,\D[g]\>\D[fk] \\
                   &= \<\pi_0 g,\D[g]\>\<\pi_0f,\D[f]\>\D[k]\\
                   &= \<\pi_0 g,\pi_1 g\>\<\pi_0f,\D[f]\>\pi_1 k\\
                   &= (g\x g)\D[f] k
          \end{align*}
  We have also shown that $\D[\pi_1] = \pi_1\pi_1$. 
  
  Suppose $A\x B \to^{h} C$.  Then note the types 
          $(A\x B) \x C \to^{\R[h]} A\x B$ and
          \[
            A\x C \to^{\<1,0\> \x 1} (A\x B) \x C \to^{\R[h]} A\x B \to^{\pi_1} B  
          \]
          
          We will show that the following diagram always
          commutes:
  
          \[
            \begin{tikzcd}[column sep=5em]
              A \x (C\x C) \ar[dr,"1\x \pi_1"']\ar[rr,"{D_C[(\<1,0\>\x 1)\R[h]\pi_1]}"] && B \\
               & A\x C \ar[ur,"{(\<1,0\> \x 1)\R[h]\pi_1}"']
            \end{tikzcd}  
          \]

          Note in the above we are using $D_C[F]$ as shorthand for $\<1\x\pi_0,0\x \pi_1\>\D[F]$.
          Then
          \begin{align*}
            &  D_C[(\<1,0\> \x 1)\R[h]\pi_1]\\
            &= \<1\x \pi_0,0\x \pi_1\>\D[(\<1,0\> \x 1)\R[h]\pi_1]\\
            &= \<1\x \pi_0,0\x \pi_1\>((\<1,0\> \x 1)\x (\<1,0\> \x 1))\D[\R[h]]\pi_1 \\
            &= \< (1\x \pi_0)(\<1,0\> \x 1),(0\x\pi_1)(\<1,0\>\x 1)\>\D[\R[h]]\pi_1 \\
            &= \< (\<1,0\> \x \pi_0), (0\x \pi_1)\>\D[\R[h]]\pi_1 \\
            &= \< (\<1,0\> \x 1)(1\x \pi_0),(\<1,0\> \x 1)(0\x \pi_1)\>\D[\R[h]]\pi_1 \qquad 0 = \<1,0\> 0\\
            &= (\<1,0\> \x 1)\<1\x \pi_0,0\x \pi_1\>\D[\R[h]]\pi_1 \\
            &= (\<1,0\> \x 1)(1\x \pi_1)\R[h]\pi_1 \qquad \text{\bf [RD.6]}\\
            &= (1\x \pi_1)(\<1,0\> \x 1)\R[h]\pi_1
          \end{align*}
  
          Then note that letting $A \to^{f} B$, and setting $h = \R[f]$ in the above
          formula, we have 
          \[
            \<1\x \pi_0,0\x \pi_1\>\D[(\<1,0\> \x 1)\R[\R[f]]\pi_1] = (1\x \pi_1)  (\<1,0\> \x 1)\R[\R[f]]\pi_1
          \]
          But then by definition we have 
          \[
            \<1\x \pi_0,0\x \pi_1\>\D[\D[f]] = (1\x \pi_1)\D[f]  
          \]
          which is {\bf [CDC.6]}.
  
    \noindent {\bf [CDC.1]}   {\bf [CDC.7]}  With the definition of the forward derivative in mind, {\bf [RD.7]}
          may be re-expressed as $\ex \D[\D[f]] = \D[\D[f]]$.  Then use 
          Lemma \ref{lemma:cd6/7-lemma} to conclude that {\bf [CDC.7]} holds.  \noindent {\bf [CDC.1]}  
  \end{proof}

\begin{example} For both $\mathsf{POLY}_\mathbb{R}$ and $\smooth$, applying Theorem \ref{theorem:cdc-from-rdc} to their respective reverse differential operators defined in Example \ref{example:rdc} results precisely in their differential combinators defined in Example \ref{example:CDC}. This follows from the fact that there is a bijective correspondence between a reverse differential combinator and a differential combinator with an involution operation, which we will discuss in Section \ref{section:cont-dagger}. 
\end{example}

\subsection{Dagger Structure and Linear Maps}\label{section:dagger-cats}

We now investigate the subcategory of linear maps of the induced Cartesian differential category structure from Theorem \ref{theorem:cdc-from-rdc} of a Cartesian reverse differential category. In particular we will show that the subcategory of linear maps has a dagger structure. 

\begin{definition} A {\bf $\dagger$-category} \cite{journal:selinger-dagger} is a category $\X$ with a stationary on objects 
involution $\X^\text{op} \to^{(\blank)^\dagger} \X$. A $\dagger$-category that also has finite biproducts $\oplus$, with projection maps $\pi_i$ and injection maps $\iota_i$, is said to have {\bf $\dagger$-biproducts} \cite{journal:selinger-dagger} when $\pi_i^\dagger = \iota_i$ (or equivalently if $\iota^\dagger_i = \pi_i$).  \end{definition}

Note that having $\dagger$-biproducts implies that $0^\dagger = 0$ and $(f+g)^\dagger = f^\dagger + g^\dagger$. At this point we can also point out that in the same way that every category with finite biproducts is a Cartesian differential category, we have the following basic example of a reverse differential category: 

\begin{example}
Every $\dagger$-category with finite $\dagger$-biproducts is a reverse differential category where for a map $A \to^f B$, $A \oplus B \to^{\R[f]} A$ is defined as $\R[f] := A \oplus B \to^{\pi_1} B \to^{f^\dagger} A$. As a particular example, let $R$ be a commutative rig and let $\mathsf{MAT}(R)$ be the category of matrices over $R$, that is, the category whose objects are the natural numbers $n \in \mathbb{N}$ and where a map $n \to^A m$ is an $n \times m$-matrix $A$ with coefficients in $R$. $\mathsf{MAT}(R)$ admits finite biproducts where on objects $n \oplus m := n + m$ and where the projection and injection maps are the obvious matrices. $\mathsf{MAT}(R)$ also admits a $\dagger$ defined as the transpose of matrices and this makes $\mathsf{MAT}(R)$ into a $\dagger$-category with finite $\dagger$-biproducts. 
\end{example}

For any map $A \to^f B$ in a reverse differential category, we can define a map of opposite type $B \to^{f^\dagger} A$ by $f^\dagger := \iota_1\R[f]$.  As the following example shows, however, in general this operation is neither functorial nor involutive.   
\begin{example} With our standard example $2 \to^p 1$  in $\mathsf{POLY}_R$, $p(x_1,x_2) = x_1^2 + 3x_1x_2 + 5x_2$, one computes that $1 \to^{p^\dagger} 2$ is the tuple of $1$ variable polynomials $p^\dagger = \langle 0, 5x \rangle$.  \end{example}

However, as we shall see, $\dagger$ is well behaved for linear maps.  

\begin{lemma}\label{lemma:reverse-dagger-biprod-coherences}
  With the preceding definition of $\dagger$ in a reverse differential category,
 one has that $\pi_i^\dagger = \iota_i$ and $\iota_i^\dagger = \pi_i$. 
\end{lemma}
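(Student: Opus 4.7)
The four identities are all direct unfoldings of the definition $f^\dagger := \iota_1 \R[f]$ together with the explicit formulas for the reverse derivatives of projections and injections that have already been established.

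For $\pi_0^\dagger = \iota_0$, I would start by writing out $\pi_0^\dagger = \iota_1 \R[\pi_0]$, where the outer $\iota_1$ is the second injection into $(A \times B) \times A$. The axiom \textbf{[RD.3]} states precisely that $\R[\pi_0] = \pi_1 \iota_0$ (with $\iota_0 : A \to A \times B$ on the right). So $\pi_0^\dagger = \iota_1 \pi_1 \iota_0$, and since $\iota_1 \pi_1 = 1$ in any Cartesian left additive category, this collapses to $\iota_0$. The identity $\pi_1^\dagger = \iota_1$ is entirely symmetric, using the second diagram in \textbf{[RD.3]} to get $\R[\pi_1] = \pi_1 \iota_1$.

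For the injections, I would invoke Lemma~\ref{lemma:reverse-der-basic}(2), which has already given $\R[\iota_0] = \pi_1 \pi_0$ and $\R[\iota_1] = \pi_1 \pi_1$. Then $\iota_0^\dagger = \iota_1 \R[\iota_0] = \iota_1 \pi_1 \pi_0 = \pi_0$ after cancelling $\iota_1 \pi_1 = 1$, and analogously $\iota_1^\dagger = \pi_1$.

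There is no real obstacle: each equation reduces to a one-line chain of rewrites once the appropriate clause of \textbf{[RD.3]} or Lemma~\ref{lemma:reverse-der-basic} is substituted. The only thing requiring attention is bookkeeping of types, since each of the symbols $\pi_0, \pi_1, \iota_0, \iota_1$ occurs at several different types across the four computations (the outer $\iota_1$ in $f^\dagger = \iota_1 \R[f]$ always lands in the domain of $\R[f]$, while the inner ones live in the codomain or domain of $f$). Making these types explicit up front will make each calculation transparent.
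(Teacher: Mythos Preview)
Your proposal is correct; the paper itself states this lemma without proof, and the direct unfolding you give---using \textbf{[RD.3]} for the projections and Lemma~\ref{lemma:reverse-der-basic}(2) for the injections, together with $\iota_1\pi_1 = 1$---is exactly the intended routine verification.
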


\begin{lemma}\label{lemma:linear-reverse}
  In a Cartesian reverse differential category, for any map $A \to^{f} B$, the 
  following are equivalent:
  \begin{enumerate}
    \item $f$ is linear (Definition \ref{definition:linear}) with respect to the differential combinator of Theorem \ref{theorem:cdc-from-rdc};
    \item $\iota_1(\iota_0 \x 1)\R[\R[f]] \pi_1 = f$;
    \item $f^{\dagger\dagger} = f$.
  \end{enumerate}
\end{lemma}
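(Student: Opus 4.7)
The plan is to prove the two equivalences $(1) \Leftrightarrow (2)$ and $(2) \Leftrightarrow (3)$ in sequence, each by a short computation built from earlier lemmas.

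The first equivalence $(1) \Leftrightarrow (2)$ is essentially definitional. By Lemma \ref{lemma:linear-equiv-both}, $f$ is linear with respect to $\D$ if and only if $\iota_1 \D[f] = f$. By Theorem \ref{theorem:cdc-from-rdc}, the induced forward derivative is $\D[f] = (\iota_0 \x 1)\R[\R[f]]\pi_1$ (identifying $\<1,0\>$ with $\iota_0$), so condition $(2)$ is literally the equation $\iota_1 \D[f] = f$ written out. No real computation is needed.

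For $(2) \Leftrightarrow (3)$, the strategy is to unfold $f^{\dagger\dagger}$ using $f^\dagger := \iota_1 \R[f]$. Two applications give $f^{\dagger\dagger} = \iota_1 \R[\iota_1 \R[f]]$. Applying Lemma \ref{lemma:reverse-der-basic}(6), namely $\R[\iota_1 h] = (\iota_1 \x 1)\R[h]\pi_1$, to the inner term with $h = \R[f]$, yields $f^{\dagger\dagger} = \iota_1(\iota_1 \x 1)\R[\R[f]]\pi_1$. To match this against condition $(2)$, it then suffices to verify the identity $\iota_1(\iota_0 \x 1) = \iota_1(\iota_1 \x 1)$ as morphisms $A \to (A \x B) \x A$. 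Both sides reduce to $\<0,1\>$ via the pairing law $\<f,g\>(h \x k) = \<fh, gk\>$ together with $0\,\iota_0 = 0 = 0\,\iota_1$, which holds in any Cartesian left additive category because $\iota_0$ and $\iota_1$ are additive.

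The only subtlety is type bookkeeping. The two $\iota_1$'s appearing in $f^{\dagger\dagger}$ live in different hom-sets (the outer one is $A \to B \x A$, the inner one inside $(\iota_1 \x 1)$ is $B \to A \x B$), and the $\iota_1$ appearing in condition $(2)$ is a third map $A \to A \x A$. Verifying that the composites $\iota_1(\iota_0 \x 1)$ and $\iota_1(\iota_1 \x 1)$ both collapse to the same second injection $\<0,1\>: A \to (A \x B) \x A$ is the one place in the argument where care is required; everything else is a direct application of Lemmas \ref{lemma:linear-equiv-both} and \ref{lemma:reverse-der-basic}(6) together with the definitions of $\D$ and $\dagger$.
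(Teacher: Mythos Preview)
Your proof is correct and follows essentially the same route as the paper's: for $(1)\Leftrightarrow(2)$ both use Lemma~\ref{lemma:linear-equiv-both} together with the definition of $\D$, and for $(2)\Leftrightarrow(3)$ both unfold $f^{\dagger\dagger}$ via Lemma~\ref{lemma:reverse-der-basic}(6) and then observe that $\iota_1(\iota_0\times 1)=\iota_1(\iota_1\times 1)=\langle 0,1\rangle$. Your explicit attention to the type bookkeeping is a nice touch that the paper leaves implicit.
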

\begin{proof} That \emph{1} $\Leftrightarrow$ \emph{2} follows from the fact that by definition, the left hand side of \emph{2} can be re-expressed as $\iota_1(\iota_0 \x 1)\R[\R[f]] \pi_1 = \<0,1\>\D[f]$, and so \emph{2} holds precisely when $\<0,1\>\D[f] = f$, which by Lemma \ref{lemma:linear-equiv} is equivalent to $\D[f]=\pi_1 f$, that is, that $f$ is linear. Next we show that \emph{2} $\Leftrightarrow$ \emph{3}. First note that $\iota_1 (\iota_0 \times 1) = \iota_1 (\iota_1 \times 1)$ since: 
$$\iota_1 (\iota_0 \times 1) = \<0,1\>(\<1,0\> \x 1) = \<0,1\>
    = \<0,1\>(\<0,1\> \x 1) = \iota_1 (\iota_1 \times 1)  $$
And then by Lemma \ref{lemma:reverse-der-basic}.6, we have the following equality: 
$$ f^{\dagger\dagger} = \iota_1 \R[\iota_1 \R[f]] = \iota_1 (\iota_1 \x 1)\R[\R[f]] \pi_1= \iota_1 (\iota_0 \x 1) \R[\R[f]]\pi_1 $$
Then it immediately follows that $f^{\dagger\dagger}=f$ if and only if $f = \iota_1(\iota_0\x 1)\R[\R[f]]\pi_1$.
\end{proof}

\begin{lemma}\label{lemma:rd6-alternative-form}
  In a Cartesian reverse differential category, for any $A \to^{f} B$, its reverse derivative $A\x B \to^{\R[f]}A$
  is linear in $B$ (Definition \ref{definition:linear}) with respect to the differential combinator of Theorem \ref{theorem:cdc-from-rdc}. Furthermore, the following diagram commutes: 
$$        \begin{tikzcd}[column sep=4em]
              ((A\x B)\x A)\x (A\x B) \ar[r,"{\R^{(3)}[f]}"] & (A\x B) \x A \ar[d,"\pi_1"] \\
              A\x B \ar[r,"{\R[f]}"'] \ar[u,"{\<\iota_0,0\> \x \iota_1}"] & A
            \end{tikzcd}  $$
\end{lemma}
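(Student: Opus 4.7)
The plan is to derive both claims directly from axiom \textbf{[RD.6]} together with the formula for $\D$ provided by Theorem \ref{theorem:cdc-from-rdc}. I expect no real obstacle: \textbf{[RD.6]} was designed exactly to express that the reverse derivative is linear in its second argument, so the first claim is a straightforward rewriting, and the second claim is \textbf{[RD.6]} pre-composed with a suitable map. The only care required is the bookkeeping with products and the several injection maps involved.

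For the first part, I would begin by recalling from Definition \ref{definition:linear} that a map $A \x B \to^{g} A$ is linear in $B$ precisely when $\<1\x \pi_0, 0\x \pi_1\>\D[g] = (1\x \pi_1)g$. Setting $g := \R[f]$ and expanding $\D[\R[f]]$ via Theorem \ref{theorem:cdc-from-rdc} as $(\iota_0 \x 1)\R[\R[\R[f]]]\pi_1$, the linearity equation becomes
\[
\<1\x \pi_0,0\x \pi_1\>(\iota_0 \x 1)\R[\R[\R[f]]]\pi_1 = (1\x \pi_1)\R[f],
\]
which is exactly \textbf{[RD.6]}.

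For the commutativity of the displayed diagram, I would pre-compose both sides of \textbf{[RD.6]} by $1\x \iota_1 : A \x B \to A \x (B \x B)$. The right-hand side becomes $(1 \x \iota_1)(1 \x \pi_1)\R[f] = (1 \x \iota_1\pi_1)\R[f] = \R[f]$, using $\iota_1\pi_1 = 1_B$. For the left-hand side, a short direct computation, tracking that the various $\iota_i$'s are $\<1,0\>$ or $\<0,1\>$ of the appropriate types, verifies the identity
\[
(1\x \iota_1)\<1 \x \pi_0, 0\x \pi_1\>(\iota_0 \x 1) = \<\iota_0,0\> \x \iota_1,
\]
so the left-hand side of \textbf{[RD.6]} becomes $(\<\iota_0,0\> \x \iota_1)\R[\R[\R[f]]]\pi_1$. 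Equating the two sides gives exactly the commutativity asserted by the diagram in the lemma.
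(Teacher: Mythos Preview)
Your proposal is correct and coincides with the paper's argument. The paper handles the first claim identically (rewriting \textbf{[RD.6]} via the definition of $\D$) and derives the diagram from linearity by invoking Lemma~\ref{lemma:linear-equiv-both}; your pre-composition of \textbf{[RD.6]} by $1\times\iota_1$ and the identity $(1\times\iota_1)\<1\times\pi_0,0\times\pi_1\>(\iota_0\times 1)=\<\iota_0,0\>\times\iota_1$ is exactly that lemma unwound in this instance.
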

\begin{proof}
That $A\x B \to^{\R[f]}A$ is linear in $B$ follows immediately from the {\bf[RD.6]} (we leave it as an exercise to re-express {\bf[RD.6]} in terms of partial derivatives). Commutativity of the diagram follows by applying Lemma \ref{lemma:partial-linear-equiv} to $\R[f]$.
\end{proof}


\begin{proposition}\label{prop:reverseGivesDaggerBiproducts}
 For a Cartesian reverse differential category $\X$, the category of linear maps of the induced Cartesian differential category structure from Theorem \ref{theorem:cdc-from-rdc}, $\mathsf{Lin}(\X)$, is a $\dagger$-category with finite $\dagger$-biproducts. 
\end{proposition}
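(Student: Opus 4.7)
The plan is to verify that the operation $f^\dagger := \iota_1 \R[f]$ restricts to a contravariant identity-on-objects involution on $\mathsf{Lin}(\X)$ and that the biproducts $\mathsf{Lin}(\X)$ already carries (by Proposition~\ref{prop:linearbiproducts}) are $\dagger$-biproducts. The $\dagger$-biproduct coherence $\pi_i^\dagger = \iota_i$ is precisely Lemma~\ref{lemma:reverse-dagger-biprod-coherences}, so the heart of the proof lies in showing that $\dagger$ is a well-defined contravariant involutive functor on linear maps.

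Involutivity and closure of $\mathsf{Lin}(\X)$ under $\dagger$ will both follow from Lemma~\ref{lemma:linear-reverse}, which characterises linear maps as the fixed points of $(-)^{\dagger\dagger}$. Indeed, if $f$ is linear then applying $\dagger$ to the equation $f^{\dagger\dagger} = f$ yields $(f^\dagger)^{\dagger\dagger} = f^\dagger$, so $f^\dagger$ is also linear; the same characterisation simultaneously shows that $\dagger$ squared is the identity on $\mathsf{Lin}(\X)$.

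Functoriality is more subtle. For the unit, $1^\dagger = \iota_1 \R[1] = \iota_1 \pi_1 = 1$ by {\bf [RD.3]}. For composition, given linear $f \colon A \to B$ and $g \colon B \to C$, I would apply the reverse chain rule from Lemma~\ref{lemma:reverse-der-basic}.1 to expand
\[
(fg)^\dagger = \iota_1 \R[fg] = \iota_1 \<\pi_0, \<\pi_0 f, \pi_1\>\R[g]\>\R[f],
\]
and then distribute $\iota_1$ through the two pairings. The critical step is collapsing the component $\iota_1 \pi_0 f$ to $0$, which requires $f$ to be additive; this is exactly where the linearity hypothesis is used, since every linear map is additive \cite{journal:BCS:CDC}. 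The remaining component simplifies to $\iota_1 \R[g] = g^\dagger$, yielding $\<0, g^\dagger\>\R[f] = g^\dagger \iota_1 \R[f] = g^\dagger f^\dagger$.

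I expect this composition step to be the main obstacle: $\dagger$ is \emph{not} functorial on all of $\X$ in general (as the example of a polynomial with a nonzero constant term illustrates), so the argument must invoke additivity precisely where the composite $0 \cdot f$ would otherwise fail to vanish. Once functoriality is in hand, Proposition~\ref{prop:linearbiproducts} and Lemma~\ref{lemma:reverse-dagger-biprod-coherences} together yield the $\dagger$-biproduct structure on $\mathsf{Lin}(\X)$, modulo the routine observation that the biproduct projections and injections are themselves linear maps (projections by {\bf [CDC.3]}, and injections as tuples $\<1,0\>$ and $\<0,1\>$ of linear maps).
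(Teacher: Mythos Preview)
Your proposal is correct and follows essentially the same route as the paper's proof: well-definedness and involutivity via Lemma~\ref{lemma:linear-reverse}, identity via {\bf [RD.3]}, composition via the reverse chain rule plus $0f=0$ for linear $f$, and $\dagger$-biproducts via Proposition~\ref{prop:linearbiproducts} together with Lemma~\ref{lemma:reverse-dagger-biprod-coherences}. The paper's only refinement is the parenthetical remark that functoriality of $\dagger$ on a composite $fg$ needs merely that $f$ preserve $0$, not full additivity.
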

\begin{proof} By Proposition \ref{prop:linearbiproducts}, we already know that 
  $\mathsf{Lin}(\X)$ has finite biproducts. We need to show that $\mathsf{Lin}(\X)$ 
  also has a $\dagger$. Lemma \ref{lemma:linear-reverse} shows that the linear maps
   are precisely those for which $f^{\dagger\dagger} = f$, and thus if $f$ is linear then 
   $f^\dagger$ is linear. Therefore $\dagger$ is well-defined and involutive. We now show that $\dagger$ is a contravariant functor. First that $\dagger$ preserves the identity:
$$ 1^\dagger = \iota_1 \R[1] = \iota_1 \pi_1 = 1$$
Next, that $\dagger$ preserves composition (recall that if $f$ is linear, then $0f = 0$):
   $$ (fg)^\dagger = \iota_1 \R[fg] = \iota_1 \<\pi_0,\<\pi_0f,\pi_1\>\R[g]\>\R[f] = \<\iota_1\pi_0,\<\iota_1\pi_0f,\iota_1\pi_1\>\R[g]\>\R[f]$$
    $$= \<0,\<0f,1\>\R[g]\>\R[f] = \<0,\<0,1\>\R[g]\>\R[f] = \<0,1\>\R[g]\<0,1\>\R[f] = g^\dagger f^\dagger $$ 
  Note in the above that functoriality only relies on $f$ preserving $0$. Thus $\mathsf{Lin}(\X)$ is a $\dagger$-category. Lastly by Lemma \ref{lemma:reverse-dagger-biprod-coherences}, $\mathsf{Lin}(\X)$ also has $\dagger$-biproducts. 
\end{proof}

\section{From forward derivatives to reverse derivatives}\label{section:cont-dagger}
In the previous section, we showed that a Cartesian reverse differential category gives 
rise to a Cartesian differential category in which the subcategory of linear maps has a dagger biproduct structure. For the converse we need  to develop Cartesian differential categories where every simple slice linear map category is a dagger 
category with dagger biproducts. The conceptual structure behind this is what 
we call a dagger fibration with fibered dagger biproducts.  We will show that 
when a Cartesian differential category's linear map fibration is such a 
dagger fibration then the category is also a Cartesian reverse differential category.

\subsection{Review of Fibrations and the Dual Fibration}
We first recall the notion of fibration (for example, see \cite[Section 1.1]{book:Jacobs-Cat-Log}) and the lesser-known idea of the dual of a fibration.  These will be helpful concepts in which to frame our characterization of reverse differential categories (Theorem \ref{thm:reverseDerivEquiv}) and to describe how the reverse derivative is functorial (Proposition \ref{proposition:reverse-functor}).  
\begin{definition}
Suppose that $q: \X \to \mathbb{B}$ is a functor.  
\begin{enumerate}	
	\item Say that a morphism $f: X \to Y$ in $\X$ is \textbf{over} a morphism $u: I \to J$ in $\mathbb{B}$ if $q(f) = u$.  
	\item Say that a morphism $f: X \to Y$ in $\X$ is \textbf{Cartesian over $u: I \to J$ in $\mathbb{B}$} if $f$ is over $u$, and for every $g: Z \to Y$ in $\X$ such that $q(q) = wu$ for some $w: q(Z) \to I$, there is a unique $h: Z \to X$ in $\X$ over $w$ such that $hf = g$:
	$$
	\begin{tikzcd}
	Z \ar[dr, dotted, "h"] \ar[drr, bend left, "g"] & & \\
	& X \ar[r,"f"] & Y \\
	q(Z) \ar[dr, "w"] \ar[drr, bend left, "q(g)"] & & \\
	& I \ar[r,"u"] & J \\	
	\end{tikzcd}
	$$
	\item Say that $q$ is a \textbf{fibration} if for every $Y$ in $\X$ and every $u: I \to q(Y)$ in $\mathbb{B}$ , there is a Cartesian morphism $f: X \to Y$ in $\X$ above $u$.  
	\item Say that an arrow $f: X \to Y$ in $\X$ is \textbf{vertical} if $f$ is over an identity map.  
	\item For an object $I$ in $\mathbb{B}$, the \textbf{fibre of $q$ over $I$}, denoted $q^{-1}(I)$, is the category whose objects are those objects of $\X$ for which $q(X) = I$, and whose arrows are vertical morphisms between them.  
\end{enumerate}
\end{definition}

\begin{example}\label{ex:fibration} If $\X$ is a Cartesian category, then the \textbf{simple fibration} \cite[Definition 1.3.1]{book:Jacobs-Cat-Log} $\widetilde{\X} \to^{\pi} \X$ is described as follows:  the total category $\widetilde{\X}$ has objects pairs of objects of $\X$ and a map  $(I,A) \to^{(f,g)} (J,B)$ is given by a pair of maps of type $I \to^{f} J$ and $I \x A \to^{g} B$.  The identity of $(I,A)$ is $(1_A, \pi_1)$ while the composition of maps $(I,A) \to^{(f,g)} (J,B)$ and $(J,B) \to^{(f',g')} (K,C)$ is defined as: $(I\to^{f} J \to^{f^\prime} K,  I\x A \to^{\<\pi_0f,g\>} J \x B \to^{g'} C)$. 
The fibration $\widetilde{\X} \to^{\pi} \X$ is the functor which on objects is $\pi(I,A) = I$ and on maps is $\pi(f,g) := f$. The 
vertical arrows in $\widetilde{\X}$ are precisely those of the form $(I,A) \to^{(1,g)} (I,B)$ while the Cartesian arrows are those of the form $ (I,A) \to^{(f,\pi_1)} (J,A)$. 
\end{example}

\begin{example}
If $\X$ is a Cartesian differential category, we denote by $\widetilde{\mathsf{Lin}(\X)}$ 
the \textbf{simple linear fibration}, whose objects are pairs of objects in $\X$ 
and whose maps $(I,A) \to^{f,g} (J,B)$ are pairs of maps $I \to^{f} J$ and $I \x A \to^{g} B$ 
where $g$ is linear in $B$. Composition and identities of $\widetilde{\mathsf{Lin}(\X)}$ are defined as for the simple fibration. The fiber over $A$ of this fibration is denoted $\mathsf{Lin}(\X)[A]$.  Note that by \cite[Proposition 1.5.4]{journal:BCS:CDC}, every fiber of $\widetilde{\mathsf{Lin}(\X)}$ 
has biproducts.
\end{example}

\begin{definition} 
Suppose that $\X \to^{q} \mathbb{B}$ is a fibration.  The {\bf dual fibration} of $q$
\cite{book:categoricalalgebra,arxiv:simple-dual} 
is a fibration $\X^{\ast} \to^{q^{\ast}} \mathbb{B}$ whose total category $\X^\ast$ has the same objects as $\X$ and where a map $X \to Y$ in $\X^\ast$ is an equivalence class of spans 
$$               \begin{tikzcd}
                    S \ar[r,"c"] \ar[d,"v"'] & Y \\
                    X
                   \end{tikzcd}  $$
                where $v$ is vertical and $c$ is Cartesian (over $q(c)$) under 
                the equivalence relation $(v,c) \sim (v',c')$ when there is 
                a vertical isomorphism $\alpha$ that makes the following 
                diagram commute.$$
                    \begin{tikzcd}
                        & S' \ar[d,"\alpha"]\ar[dl,"v'"'] \ar[dr,"c'"] \\
                        X & S \ar[l,"v"] \ar[r,"c"'] & Y
                    \end{tikzcd}  $$
To compose such spans, note that given a cospan $S \to^{c} X' \from^{v'} S'$ with $c$ cartesian and $v'$ vertical,that there is a cartesian arrow over $q(c)$ with codomain $S'$, and this induces uniquely  a $v''$ making the relevant square commute, and we get a  
span $S \from^{v''} S'' \to^{\hat{c}} S'$ with $v''$ vertical and $\hat{c}$ cartesian; this span is used to from the composite of the spans $(v,c) (v',c')$.
For more details, see \cite{arxiv:simple-dual}. The fibration $q^\ast$ is defined 
on objects as $q^\ast(A) := q(A)$, and defined on maps as $q^\ast(v,c) := q(X)=q(S) \to^{q(C)} q(Y)$.
\end{definition}

\begin{example}The dual of the simple fibration, $\widetilde{X}^\ast$, can be described as the category with objects pairs of objects of $\X$ and with maps $(I,A) \to^{(f,g)} (J,B)$ where $I\to^{f} J$ and a $I\x B \to^{g} A$. The identity on $(I,A)$ is $(1,\pi_1)$, while composition of maps $(I,A) \to^{(f,g)} (J,B)$ and ${(J,B) \to^{(f',g')} (K,C)}$ is 
                  defined to be 
$$ (I \to^{ff'} K, I \x C \to^{\<\pi_0,\<\pi_0f,\pi_1\>\>} (I \x (J \x C)) \to^{1\x g'} I \x B \to^{g} A).  $$ 
\end{example}

\begin{example}
The dual of the linear fibration, $\widetilde{\mathsf{Lin}(\X)}^\ast$, has again objects $(I,A)$
but now maps $(I,A) \to^{(f,g)} (J,B)$ consist of pairs of a map $I \to^{f} J$ and a map 
$I\x B \to^{g} A$ such that $g$ is linear in $B$. 
\end{example}

The dual of the linear fibration allows us to describe how the reverse derivative is functorial:
\begin{proposition}\label{proposition:reverse-functor}
  For a Cartesian reverse differential category $\X$, there is a product-preserving functor ${\X \to \widetilde{\mathsf{Lin}(\X)}^\ast}$ defined on objects as $A \mapsto (A,A)$ and on maps as $f \mapsto (f,R[f])$. 
\end{proposition}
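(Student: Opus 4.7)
The plan is to verify in turn well-definedness, functoriality, and product preservation.

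Well-definedness is immediate from Lemma \ref{lemma:rd6-alternative-form}: the reverse derivative $\R[f]:A\times B \to A$ is linear in $B$ with respect to the induced Cartesian differential structure of Theorem \ref{theorem:cdc-from-rdc}, so $(f, \R[f])$ is a legitimate morphism of $\widetilde{\mathsf{Lin}(\X)}^\ast$. For functoriality, the identity case is handled by $\R[1_A]=\pi_1$ (first clause of {\bf [RD.3]}), which is the identity on $(A,A)$ in the target. For composition, applying the composition formula of $\widetilde{\X}^\ast$ from the preceding example to $F(f)\circ F(g)$ yields
\[
\bigl(fg,\; \<\pi_0,\<\pi_0 f,\pi_1\>\>(1\times \R[g])\,\R[f]\bigr),
\]
whose second component equals $\R[fg]$ by {\bf [RD.5]} (equivalently, Lemma \ref{lemma:reverse-der-basic}.1), giving $F(f)\circ F(g) = F(fg)$.

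For product preservation, I would verify the universal property directly in $\widetilde{\mathsf{Lin}(\X)}^\ast$. For the terminal object, any morphism $(I,C) \to (\mathbf{1},\mathbf{1})$ has first component the unique $!_I$ and second component a map $I\times \mathbf{1}\to C$ linear in $\mathbf{1}$; since $\mathbf{1}$ is a zero object in $\mathsf{Lin}(\X)$ (Proposition \ref{prop:linearbiproducts}), the second component is forced to be $0$, so $F(\mathbf{1})$ is terminal. For binary products, I claim $(A\times B, A\times B)$ equipped with $F(\pi_0)=(\pi_0,\R[\pi_0])$ and $F(\pi_1)=(\pi_1,\R[\pi_1])$ is the product of $(A,A)$ and $(B,B)$; these projections are well-defined since by {\bf [RD.3]}, $\R[\pi_i] = \pi_1\iota_i$ is plainly linear in its second argument. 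Given a cone $(I,C) \to^{(h_1,k_1)} (A,A)$ and $(I,C) \to^{(h_2,k_2)} (B,B)$, the first component of any mediating arrow is forced to be $\<h_1,h_2\>$ by the universal property in $\X$. Unwinding the composition with $F(\pi_0)$ via $\R[\pi_0]=\pi_1\iota_0$ reduces the constraint on the second component $k:I\times(A\times B)\to C$ to $k(1\times \iota_0) = k_1$, and symmetrically $k(1\times \iota_1) = k_2$ from $F(\pi_1)$. Linearity of $k$ in $A\times B$ then determines it uniquely as $(1\times\pi_0)k_1 + (1\times\pi_1)k_2 = \<k_1 | k_2\>$ in the notation of Section \ref{section:clacs}; existence follows by checking that this $k$ is linear in $A\times B$ (sums and reindexings of maps linear in their second argument remain so) and satisfies both triangles.

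The main obstacle is the product-preservation step, which demands careful tracking of composition in the dual fibration together with an invocation of linearity of the second component for both existence and uniqueness of the mediating map. Once {\bf [RD.3]} is unpacked as $\R[\pi_i]=\pi_1\iota_i$, however, the remaining bookkeeping is of the same flavour as that underlying the biproduct structure on $\mathsf{Lin}(\X)$ established in Proposition \ref{prop:reverseGivesDaggerBiproducts}.
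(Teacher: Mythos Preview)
Your argument is correct and considerably more detailed than the paper's own proof, which simply reads ``This follows from {\bf [RD.3]} and {\bf [RD.5]}.'' For functoriality you invoke precisely those two axioms, so the core is identical; where you go further is in explicitly checking well-definedness via Lemma~\ref{lemma:rd6-alternative-form} and in supplying a full product-preservation argument (terminal object via linearity forcing the second component to $0$, binary products via the biproduct structure on the fibres of the linear fibration). The paper leaves both of these implicit.

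One small remark on notation: since the paper writes composition in diagrammatic order, the constraint you obtain by postcomposing with $F(\pi_0)$ should read $(1\times\iota_0)\,k = k_1$ rather than $k(1\times\iota_0)=k_1$; the content of your argument is unaffected.
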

\begin{proof}
  This follows from {\bf [RD.3]} and {\bf [RD.5]}.  
\end{proof}

\begin{lemma}\label{lemma:dual-fiber-lemma}
  A fiber of the dual fibration is isomorphic to the opposite category of the associated fiber of the starting 
  fibration; that is, for any $A$ in $\mathbb{B}$, ${q^\ast}^{\text{-}1}(A) \simeq (q^{\text{-}1}(A))^\text{op}$ 
  and moreover the isomorphism is stationary on objects.
\end{lemma}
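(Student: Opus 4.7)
The plan is to unpack what a morphism in the fiber $(q^\ast)^{-1}(A)$ must look like, find a canonical representative in each equivalence class, and then verify that the assignment is functorial with the expected reversal of composition.

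First I would observe: a map $X \to Y$ in $(q^\ast)^{-1}(A)$ is represented by a span $X \xleftarrow{v} S \xrightarrow{c} Y$ with $v$ vertical and $c$ Cartesian, such that the total composite lives over $1_A$. Since $q(v) = 1_{q(S)}$ and the composite in the base is $q(c)$, this forces $q(c) = 1_A$. But a Cartesian arrow over an identity is a (vertical) isomorphism. Hence $c \colon S \to Y$ is a vertical iso, and applying the equivalence relation with $\alpha = c^{-1} \colon Y \to S$ shows $(v,c) \sim (c^{-1} v, 1_Y)$. So every class has a representative of the form $(u, 1_Y)$ where $u \colon Y \to X$ is a vertical arrow, i.e.\ a morphism of $q^{-1}(A)$. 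Uniqueness of this canonical form is immediate: if $(u, 1_Y) \sim (u', 1_Y)$, the mediating vertical iso $\alpha$ must satisfy $\alpha \cdot 1_Y = 1_Y$, forcing $\alpha = 1_Y$ and thus $u = u'$.

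Next I would define, for each object $A$ of $\mathbb{B}$, a functor $\Phi \colon (q^\ast)^{-1}(A) \to (q^{-1}(A))^{\text{op}}$ which is the identity on objects and which sends the class of $(u, 1_Y)$ to the arrow $u \colon Y \to X$ of $q^{-1}(A)$, viewed reversed in the opposite category. The bijection on homsets is the preceding analysis. It remains to check functoriality: the identity on $X$ in the dual fiber is represented by $(1_X, 1_X)$, which sends to $1_X$ in $q^{-1}(A)$, as required. For composition, given representatives $(u, 1_Y) \colon X \to Y$ and $(w, 1_Z) \colon Y \to Z$, I would compute their composite in $\X^\ast$ by the span composition recipe: the cospan $Y \xrightarrow{1_Y} Y \xleftarrow{w} Z$ admits the trivial Cartesian lift $1_Z \colon Z \to Z$ over $1_A$, and the unique induced vertical arrow $Z \to Y$ is $w$ itself. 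The composite span is then $X \xleftarrow{wu} Z \xrightarrow{1_Z} Z$, whose canonical representative is $(wu, 1_Z)$. Under $\Phi$ this corresponds to $wu$, which, read in the opposite direction, is exactly the composite of $u$ and $w$ in $(q^{-1}(A))^{\text{op}}$ (diagrammatic order reversed).

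The main obstacle is really bookkeeping: keeping straight which composites are taken in the opposite category versus the original, and carefully handling the span composition (especially the fact that one pulls back a Cartesian morphism along a vertical morphism rather than the reverse). Once the canonical representative lemma is established, the rest is formal, and the fact that the isomorphism is stationary on objects is by construction.
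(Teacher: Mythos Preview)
Your proposal is correct and follows essentially the same approach as the paper: both observe that in the fiber over $A$ the Cartesian leg $c$ is over an identity and hence a vertical isomorphism, then replace each span by the canonical representative $(c^{-1}v,1_Y)$ to obtain the bijection with maps in $(q^{-1}(A))^{\text{op}}$. You go further than the paper by explicitly verifying uniqueness of the canonical form and functoriality (identity and composition of spans), whereas the paper simply asserts ``the isomorphism then follows'' after establishing the hom-set bijection.
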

\begin{proof}
    First, $(q^{\text{-}1}(A))^\text{op}$ has 
    \begin{description}
      \item[Obj: ] $X \in \X$ such that $q(X) = A$.
      \item[Arr: ] $X \to^{f} Y$ is a map $Y\to^{f} X$ in $\X$ such that $q(f)=1_A$.
    \end{description}
  
    On the other hand, ${q^\ast}^{\text{-}1}(A)$ has 
    \begin{description}
      \item[Obj: ] $X \in \X$ such that $q(X) = A$. 
      \item[Arr: ] $X \to Y$ are spans 
      \[
        \begin{tikzcd}
         S \ar[r,"c"] \ar[d,"v"'] & Y \\
         X
        \end{tikzcd}  
      \] 
      where $c$ is vertical i.e. $q(c) =1_A$. 
    \end{description}
    Since $c$ is both vertical and Cartesian, there is a unique vertical isomorphism $w$
    that inverts $c$.  Then the span is equivalent to $(wv,1_Y)$.  Thus spans can be 
    taken to be of the form $(u,1)$ with $Y \to^{u} X$.  
  
    The isomorphism  then follows.
  \end{proof}

Note that $\X$ and $\X^{\ast\ast}$ are also isomorphic as fibrations over $\mathbb{B}$; see \cite[Proposition 3.4]{arxiv:simple-dual}. 

\subsection{Dagger fibrations}


We now introduce the notion of a dagger fibration. First recall that a \emph{morphism of fibrations} (over a fixed base) is a commuting triangle: 
$$  \begin{tikzcd}
    \X \ar[dr,"p"'] \ar[rr,"h"] && \Y \ar[dl,"q"] \\
    & \mathbb{B}
  \end{tikzcd}  $$
where $h$ carries Cartesian maps to Cartesian maps.  

\begin{definition}
A {\bf dagger fibration} is 
given by a fibration $\X \to^{q} \mathbb{B}$ with a morphism of fibrations $\X \to^{(\blank)^\dagger} \X^\ast$ such that
$$  \begin{tikzcd}
    \X \ar[r,"(\blank)^\dagger"] \ar[dr,"q"'] \ar[rr,bend left,shift left,"1_\X"]
     & \X^\ast \ar[d,"q^\ast"] \ar[r,"(\blank)^\dagger"] & \X^{\ast\ast}=\X \ar[dl,"q"]\\
    & \mathbb{B}
  \end{tikzcd}  $$
and such that $\dagger$ is stationary on objects.  A dagger fibration has a {\bf dagger cleavage} 
when $(\blank)^\dagger$ sends cloven cartesian arrows to cloven cartesian arrows.
\end{definition}

Our main example of a dagger fibration will be the linear fibration of a Cartesian reverse differential category.  We begin by defining the required dagger (this is a more general form of the dagger discussed earlier in Section \ref{section:dagger-cats}):

\begin{definition}\label{defn:ContextDagger}
In a Cartesian reverse differential category $\X$, for a map $C\x A \to^{f} B$, define the {\bf contextual $\dagger$} of $f$, $C \times B \to^{f^{\dagger[C]}} A$, as follows:
  $$  f^{\dagger[C]} := C\x B \to^{\iota_0 \x 1} (C\x A)\x B \to^{\R[f]} C\x A \to^{\pi_1} A  $$
\end{definition}

\begin{lemma}\label{lemma:dagger-context-linear}
  In a Cartesian reverse differential category, for any map $C \x A \to^{f} B$, the following
  are equivalent:
  \begin{enumerate}
    \item $f$ is linear in $A$ (Definition \ref{definition:linear}) with respect to the differential combinator of Theorem \ref{theorem:cdc-from-rdc};
    \item $(\iota_0 \x \iota_1)\ex \D[f] = f$;
    \item $f^{\dagger[C]\dagger[C]} = f$.
  \end{enumerate}
\end{lemma}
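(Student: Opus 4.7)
The plan is to establish $(1) \Leftrightarrow (2)$ and $(2) \Leftrightarrow (3)$, both mirroring Lemma \ref{lemma:linear-reverse} but with the context $C$ threaded through. For $(1) \Leftrightarrow (2)$, apply part (2) of Lemma \ref{lemma:linear-equiv-both} to $f \colon C \x A \to B$ regarded as a map with $A$ as its second factor: this gives linearity in $A$ iff $(\iota_0 \x \iota_1)\D[f] = f$, where $\iota_0 \colon C \to C\x A$ and $\iota_1 \colon A \to C\x A$. A short pointwise check shows
\[
(\iota_0^{C\to C\x A} \x \iota_1^{A\to C\x A}) \;=\; (\iota_0^{C\to C\x C} \x \iota_1^{A\to A\x A})\,\ex,
\]
since both composites send $(c,a)$ to $((c,0_A),(0_C,a))$ in $(C\x A)\x(C\x A)$. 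Hence the equation of (2) is literally the linearity-in-$A$ criterion from Lemma \ref{lemma:linear-equiv-both}.

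For $(2) \Leftrightarrow (3)$, I would show directly that $f^{\dagger[C]\dagger[C]} = (\iota_0 \x \iota_1)\ex \D[f]$. Unfolding Definition \ref{defn:ContextDagger} twice gives
\[
f^{\dagger[C]\dagger[C]} \;=\; (\iota_0 \x 1_A)\,\R\bigl[(\iota_0 \x 1_B)\R[f]\pi_1\bigr]\, \pi_1,
\]
where the outer $\iota_0$ has type $C \to C\x B$ and the inner one has type $C \to C\x A$. First, Lemma \ref{lemma:reverse-der-basic}.4 (i.e.\ $\R[g\pi_1] = (1\x \iota_1)\R[g]$ with $\iota_1 \colon A \to C \x A$) peels off the trailing $\pi_1$; after absorbing $(\iota_0 \x 1_A)(1 \x \iota_1) = (\iota_0 \x \iota_1)$, this reduces $f^{\dagger[C]\dagger[C]}$ to $(\iota_0 \x \iota_1)\,\R[(\iota_0 \x 1_B)\R[f]]\, \pi_1$. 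Next apply the reverse chain rule \textbf{[RD.5]} to the remaining $\R$, using Lemma \ref{lemma:reverse-der-basic}.5 together with Lemma \ref{lemma:reverse-der-basic}.2 to evaluate $\R[\iota_0 \x 1_B] = \ex(\pi_1\pi_0 \x \pi_1)$. Careful tracking of the projections collapses the composite to a single $\R[\R[f]]$ precomposed with an injection into $((C\x A)\x B)\x(C\x A)$ and postcomposed with $\pi_1$; comparing with $\D[f] = (\<1,0\>\x 1)\R[\R[f]]\pi_1$ from Theorem \ref{theorem:cdc-from-rdc} (either combinatorially, or by a pointwise check in which both sides send $(c,a)$ to $(((c,0_A),0_B),(0_C,a))$ before applying $\R[\R[f]]\pi_1$) identifies the result with $(\iota_0 \x \iota_1)\ex \D[f]$.

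The main obstacle will be careful type bookkeeping: the symbols $\iota_0$, $\iota_1$, $\pi_0$, $\pi_1$ appear at a variety of different types throughout the computation, with $\iota_0$ ranging over $C \to C\x A$, $C \to C\x B$, and $C \to C\x C$, and $\iota_1$ over $A \to C\x A$ and $A \to A\x A$. One must match them correctly when they interact with products, pairings, and the exchange $\ex$. Once the types are pinned down, the combinator manipulations themselves are routine applications of the reverse chain rule and the identities collected in Lemma \ref{lemma:reverse-der-basic}, following exactly the pattern of the proof of Lemma \ref{lemma:linear-reverse}.
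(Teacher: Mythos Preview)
Your proposal is correct and follows essentially the same approach as the paper, which for $(1)\Leftrightarrow(2)$ simply cites Lemma~\ref{lemma:linear-equiv-both} and for $(2)\Leftrightarrow(3)$ says only that the argument is ``essentially the same as in Lemma~\ref{lemma:linear-reverse}.'' Your careful tracking of the multiple types of $\iota_0,\iota_1$ and the identification $(\iota_0^{C\to C\times A}\times\iota_1^{A\to C\times A}) = (\iota_0^{C\to C\times C}\times\iota_1^{A\to A\times A})\,\ex$ is exactly the bookkeeping the paper elides, and your route through Lemma~\ref{lemma:reverse-der-basic}.4 plus the chain rule to reduce $f^{\dagger[C]\dagger[C]}$ to the form $k\,\R[\R[f]]\,\pi_1$ (with $k(c,a)=(((c,0_A),0_B),(0_C,a))$) and then match it against $(\iota_0\times\iota_1)\ex\D[f]$ is sound.
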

\begin{proof}
  \emph{1} $\Leftrightarrow$ \emph{2} follows from Lemma 
  \ref{lemma:partial-linear-equiv}.  To show that 
  \emph{2} $\Leftrightarrow$ \emph{3} requires a bit more work, but the proof is essentially the same as in Lemma \ref{lemma:linear-reverse}. \end{proof}


\begin{corollary}\label{cor:dagger-preserves-linear-arg}
  Let $\X$ be a Cartesian reverse differential category and let $I \x A \to^{g} B$ be linear in $A$.  Then $I\x B \to^{g^{\dagger[I]}} A$
  is linear in $B$.
\end{corollary}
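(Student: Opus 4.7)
The plan is to derive this directly from Lemma \ref{lemma:dagger-context-linear}, which characterizes linearity in the second argument by the condition that applying $\dagger[I]$ twice yields the original map. The key observation is that linearity in the relevant argument is equivalent to $\dagger[I]$-involutivity, and this involutivity condition is self-dual: if it holds for $g$, it automatically holds for $g^{\dagger[I]}$.

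More precisely, first I would note that by Lemma \ref{lemma:dagger-context-linear} applied to $g : I \x A \to B$, the hypothesis that $g$ is linear in $A$ is equivalent to $g^{\dagger[I]\dagger[I]} = g$. Then I would apply the contextual dagger $\dagger[I]$ once more to both sides of this equation, yielding $(g^{\dagger[I]\dagger[I]})^{\dagger[I]} = g^{\dagger[I]}$, which can be reassociated as $(g^{\dagger[I]})^{\dagger[I]\dagger[I]} = g^{\dagger[I]}$. Finally, applying Lemma \ref{lemma:dagger-context-linear} in the other direction to the map $g^{\dagger[I]} : I \x B \to A$, this last equation says precisely that $g^{\dagger[I]}$ is linear in $B$, as required.

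There is essentially no obstacle here beyond unpacking types: the only thing to verify is that $g^{\dagger[I]}$ is indeed a map of the shape $I \x B \to A$ to which Lemma \ref{lemma:dagger-context-linear} can be applied (with the roles of $A$ and $B$ swapped), and that the composite $\dagger[I] \circ \dagger[I] \circ \dagger[I]$ on $g$ can be regrouped either as $(g^{\dagger[I]\dagger[I]})^{\dagger[I]}$ or as $(g^{\dagger[I]})^{\dagger[I]\dagger[I]}$, which is immediate from the definition of $\dagger[I]$ as a function on morphisms.
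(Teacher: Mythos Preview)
Your proof is correct and is exactly the argument the paper intends: the corollary is stated without proof immediately after Lemma \ref{lemma:dagger-context-linear}, and the only reasonable reading is the one you give, namely that $g$ linear in $A$ gives $g^{\dagger[I]\dagger[I]} = g$, hence $(g^{\dagger[I]})^{\dagger[I]\dagger[I]} = g^{\dagger[I]}$, and another application of the lemma yields that $g^{\dagger[I]}$ is linear in $B$.
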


\begin{theorem}\label{thm:ReverseHasDualFibration}
If $\X$ is a Cartesian reverse differential category, then its associated linear fibration is a dagger fibration, with dagger as in Definition \ref{defn:ContextDagger}.  
\end{theorem}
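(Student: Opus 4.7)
The plan is to exhibit $(-)^\dagger \colon \widetilde{\mathsf{Lin}(\X)} \to \widetilde{\mathsf{Lin}(\X)}^\ast$ as the identity on objects and, on a morphism $(f,g) \colon (I,A) \to (J,B)$ (so $g \colon I \times A \to B$ is linear in $A$), as $(f,g)^\dagger := (f, g^{\dagger[I]})$. By Corollary \ref{cor:dagger-preserves-linear-arg}, $g^{\dagger[I]} \colon I \times B \to A$ is linear in $B$, so this really lands in $\widetilde{\mathsf{Lin}(\X)}^\ast$. Both fibrations send $(f,g)$ to $f$, so the triangle over the base commutes automatically, and by construction $\dagger$ is stationary on objects. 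Involutivity on morphisms is immediate from Lemma \ref{lemma:dagger-context-linear}: since $g$ is linear in $A$, $g^{\dagger[I]\dagger[I]} = g$.

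The substantive step is functoriality of $\dagger$. For identities, I compute $\pi_1^{\dagger[I]}$ directly: $\pi_1^{\dagger[I]} = (\iota_0 \times 1)\R[\pi_1]\pi_1$, and applying {\bf [RD.3]} (which gives $\R[\pi_1] = \pi_1 \iota_1$) collapses this to $\pi_1$, so $(1_I,\pi_1)^\dagger = (1_I,\pi_1)$. For composition, given $(f,g) \colon (I,A) \to (J,B)$ and $(f',g') \colon (J,B) \to (K,C)$ whose composite in the linear fibration is $(ff', \langle \pi_0 f, g\rangle g')$, I must verify
\[
  \bigl(\langle \pi_0 f, g\rangle g'\bigr)^{\dagger[I]}
  \;=\;
  \langle \pi_0, \langle \pi_0 f,\pi_1\rangle\rangle\,(1 \times g'^{\dagger[J]})\,g^{\dagger[I]},
\]
which is precisely the composite $(f,g^{\dagger[I]}) \circ (f',g'^{\dagger[J]})$ in $\widetilde{\mathsf{Lin}(\X)}^\ast$ using the formula from the dual-of-simple-fibration example. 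The strategy is to unfold the left-hand side using {\bf [RD.5]} (chain rule) and then {\bf [RD.4]} (tupling), then rewrite $\R[\pi_0 f]$ via Lemma \ref{lemma:reverse-der-basic}(3), and finally use the linearity of $g$ in $A$ together with Lemma \ref{lemma:rd6-alternative-form} to collapse the iterated-$\R$ terms that arise in the partial derivative of the pairing. This computation is the principal obstacle of the proof; everything else is bookkeeping.

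Finally, I need $(-)^\dagger$ to preserve Cartesian arrows. The Cartesian arrows in $\widetilde{\mathsf{Lin}(\X)}$ over $f \colon I \to J$ with codomain $(J,A)$ are, as in Example \ref{ex:fibration}, exactly those of the form $(f,\pi_1) \colon (I,A) \to (J,A)$. Since $\pi_1^{\dagger[I]} = \pi_1$ as computed above, $(f,\pi_1)^\dagger = (f,\pi_1)$. Unpacking the span description of $\widetilde{\mathsf{Lin}(\X)}^\ast$ shows that a pair of the form $(f,\pi_1)$ represents the span $(I,A) \xleftarrow{(1_I,\pi_1)} (I,A) \xrightarrow{(f,\pi_1)} (J,A)$ whose vertical leg is an identity, and such a span is Cartesian in the dual. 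Hence Cartesian arrows are sent to Cartesian arrows, and $(-)^\dagger$ is a morphism of fibrations, completing the verification that $\widetilde{\mathsf{Lin}(\X)} \to \X$ is a dagger fibration.
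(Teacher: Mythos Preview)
Your proposal is correct and follows the paper's approach almost exactly: same definition of $(f,g)^\dagger = (f,g^{\dagger[I]})$, same appeal to Corollary~\ref{cor:dagger-preserves-linear-arg} for well-definedness, same computation of $\pi_1^{\dagger[I]}=\pi_1$ for identities and Cartesian preservation, and same use of Lemma~\ref{lemma:dagger-context-linear} for involutivity.

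One small correction to your sketch of the composition step: no ``iterated-$\R$ terms'' arise, and Lemma~\ref{lemma:rd6-alternative-form} is not used. After applying {\bf[RD.5]}, {\bf[RD.4]}, and Lemma~\ref{lemma:reverse-der-basic}(3) as you describe, the $\R[\pi_0 f]$-summand ends in $\iota_0$ and is killed by the trailing $\pi_1$. The remaining simplification that aligns the two sides hinges on the much more elementary fact that linearity of $g$ in $A$ gives $\langle c,0\rangle g = 0$, so that the term $\pi_0\langle 1,0\rangle\langle\pi_0 f,g\rangle$ appearing inside $\R[g']$ reduces to $\pi_0\langle f,0\rangle = \pi_0 f\langle 1,0\rangle$. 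Everything else is straightforward reassociation.
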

\begin{proof}

    First, we must show that the assignment
    \[
      \widetilde{\mathsf{Lin}(\X)} \to^{(\blank)^{\dagger}} \widetilde{\mathsf{Lin}(\X)}^\ast
    \]
    given by
    \[
      ((I,A) \to^{(f,g)} (J,B))^\dagger := (f,g^{\dagger[I]})  
    \]
    is a morphism of fibrations where $g^{\dagger[I]}$ is the contextual 
    $\dagger$ of Definition \ref{defn:ContextDagger}.   This assignment 
    is well-defined by Corollary \ref{cor:dagger-preserves-linear-arg}, and is by definition stationary on objects.  
    
    First, we show it is a functor.  That it preserves identities: we have
    $(1,\pi)^\dagger = (1,\pi_1^{\dagger[I]})$.  Thus it suffices to show 
    that $\pi_1^{\dagger[I]} = \pi_1$, but 
    \[
     \pi_1^{\dagger[I]}
       = (\<1,0\> \x 1)R[\pi_1]\pi_1
       = (\<1,0\> \x 1)\pi_1 \<0,1\>\pi_1
       = \pi_1 
    \]
    as desired.
  
    Next, we show that it preserves composition.  We begin with:
    \begin{align*}
      ((f,g)(f',g'))^\dagger 
      &= (ff',(\<\pi_0f,g\>g')^{\dagger[I]}).
    \end{align*}
    Next, 
    \[
      (\<\pi_0f,g\>g')^{\dagger[I]} = (\<1,0\> \x 1)\R[\<\pi_0f,g\>g'] \pi_1.
    \]
    We first isolate the middle piece:
    \begin{align*}
      &  \R[\<\pi_0f,g\>g']\\
      &= \<\pi_0,\<\pi_0\<\pi_0f,g\>,\pi_1\>\R[g']\>\R[\<\pi_0f,g\>]\\
      &= \<\pi_0,\<\pi_0\<\pi_0f,g\>,\pi_1\>\R[g']\>((1\x \pi_0)\R[\pi_0f]+(1\x \pi_1)\R[g])\\
      &= \<\pi_0,\<\pi_0\<\pi_0f,g\>,\pi_1\>\R[g']\>((\pi_0 \x \pi_0)\R[f]\iota_0 + (1\x \pi_1)\R[g])
    \end{align*}
    Now when we postcompose the above by $\pi_1$ the first piece of the sum vanishes, because $\iota_0\pi_1=0$.
    Thus, we resume the main calculation of $(\<\pi_0f,g\>g')^{\dagger[I]}$:
    \begin{align*}
      & (\<1,0\> \x 1)\R[\<\pi_0f,g\>g'] \pi_1\\
      &= (\<1,0\> \x 1) \<\pi_0,\<\pi_0\<\pi_0f,g\>,\pi_1\>\R[g']\>((\pi_0 \x \pi_0)\R[f]\iota_0 + (1\x \pi_1)\R[g])  \pi_1\\
      &=(\<1,0\> \x 1) \<\pi_0,\<\pi_0\<\pi_0f,g\>,\pi_1\>\R[g']\>(1\x \pi_1)\R[g] \pi_1\\
      &=(\<1,0\> \x 1) \<\pi_0,\<\pi_0\<\pi_0f,g\>,\pi_1\>\R[g']\pi_1\>\R[g] \pi_1\\
      &= \<\pi_0\<1,0\>,\<\pi_0\<1,0\>\<\pi_0f,g\>,\pi_1\>\R[g']\pi_1\>\R[g]\pi_1\\
      &= \<\pi_0\<1,0\>,\<\pi_0\<f,\<1,0\>g\>,\pi_1\>\R[g']\pi_1\>\R[g]\pi_1\\
      &= \<\pi_0\<1,0\>,\<\pi_0\<f,0\>,\pi_1\>\R[g']\pi_1\>\R[g]\pi_1 \qquad \text{$g$ is linear in $2^\text{nd}$ arg}\\
      &= \<\pi_0\<1,0\>,\<\pi_0f\<1,0\>,\pi_1\>\R[g']\pi_1\>\R[g]\pi_1 \\
      &= \<\pi_0\<1,0\>,\<\pi_0f,\pi_1\>(\<1,0\> \x 1)\R[g']\pi_1\>\R[g]\pi_1 \\
      &= \<\pi_0,\<\pi_0f,\pi_1\>\>(\<1,0\> \x ((\<1,0\>\x 1)\R[g']\pi_1))\R[g]\pi_1\\
      &= \<\pi_0,\<\pi_0f,\pi_1\>\>(1\x ((\<1,0\> \x 1)\R[g']\pi_1) )\R[g]\pi_1
    \end{align*}
  
    Now consider 
    \begin{align*}
      & (f,g)^\dagger (f',g')^\dagger \\
      &= (f,(\<1,0\> \x 1)\R[g]\pi_1) (f',(\<1,0\>\ x1)\R[g']\pi_1)\\
      &= (ff',\<\pi_0,\<\pi_0f,\pi_1\>\>(1\x ((\<1,0\> \x 1)\R[g']\pi_1) )\R[g]\pi_1)\\
      &= (ff',(\<1,0\> \x 1)\R[\<\pi_0f,g\>g'] \pi_1) \qquad \text{by the above}\\
      &= ((f,g)(f',g'))^\dagger
    \end{align*}
  
    Thus $(\blank)^\dagger$ preserves composition, hence is a functor.
  
    Next, 
    \[
      \begin{tikzcd}
         \widetilde{\mathsf{Lin}(\X)} \ar[dr,"\pi"']\ar[rr,"{(\blank)^\dagger}"] && \widetilde{\mathsf{Lin}(\X)}^\ast \ar[dl,"\pi^\ast"] \\ 
           & \X 
      \end{tikzcd}
    \]
    commutes because $\pi^\ast((f,g)^\dagger) = \pi^\ast(f,g^{\dagger[I]}) = f = \pi(f,g)$.
  
    We have already seen that $\pi_1^{\dagger[I]}= \pi_1$, thus $(\blank)^\dagger$ 
    carries Cartesian morphisms to Cartesian morphisms, thus it is a morphism of 
    fibrations.  Also note that the above fact means that $\dagger$ is stationary on 
    Cartesian arrows: $(f,\pi_1)^\dagger = (f,\pi_1)$, and hence stationary on objects, and 
    the fibration has a dagger cleavage.
  
    Finally, note that for a map $(f,g) : (I,A) \to (J,B)$ we require that 
    $g$ be linear in $A$.  Then because $g$ is linear in $A$ 
    \[
      (f,g)^{\dagger\dagger} = (f,g^{\dagger[I]})^\dagger = (f,g^{\dagger[I]\dagger[I]}) = (f,g)
    \]
    by Lemma \ref{lemma:dagger-context-linear}.  Thus, the linear fibration of $\X$ is a dagger fibration.  
  \end{proof}

\begin{lemma}\label{lemma:fibration-dagger-lemma}
  If $\X \to^{q} \mathbb{B}$ is a dagger fibration with a dagger cleavage, then each fiber $q^{\text{-}1}(A)$ 
  is a $\dagger$-category, and reindexing preserves the dagger.
\end{lemma}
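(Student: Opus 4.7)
The plan is to leverage Lemma \ref{lemma:dual-fiber-lemma} to identify, for each object $A$ of $\mathbb{B}$, the fiber ${q^\ast}^{-1}(A)$ with the opposite category $(q^{-1}(A))^\text{op}$ via an isomorphism stationary on objects. Since $(-)^\dagger: \X \to \X^\ast$ is a morphism of fibrations, it restricts to vertical maps: any vertical $f: Y \to Z$ over $1_A$ satisfies $q^\ast(f^\dagger) = q(f) = 1_A$, so $f^\dagger$ is vertical in $\X^\ast$. Composing with the isomorphism of Lemma \ref{lemma:dual-fiber-lemma} yields a contravariant endofunctor on $q^{-1}(A)$ that is stationary on objects. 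The hypothesis that $(-)^\dagger$ squares to the identity on $\X$ (through the canonical $\X \simeq \X^{\ast\ast}$) then forces this endofunctor to be an involution, equipping $q^{-1}(A)$ with the structure of a $\dagger$-category.

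For the reindexing claim, I would fix $u: I \to A$ in $\mathbb{B}$ with chosen Cartesian lifts $\overline{u}_Y: u^\ast Y \to Y$. Given a vertical arrow $f: Y \to Z$ in $q^{-1}(A)$, its reindexing $u^\ast f$ is the unique vertical arrow with $(u^\ast f)\, \overline{u}_Z = \overline{u}_Y\, f$. Applying the functor $(-)^\dagger$ produces the corresponding equation $\overline{u}_Z^\dagger\, (u^\ast f)^\dagger = f^\dagger\, \overline{u}_Y^\dagger$ in $\X^\ast$. The dagger cleavage hypothesis guarantees that $\overline{u}_Y^\dagger$ and $\overline{u}_Z^\dagger$ are chosen Cartesian lifts of $u$ in $\X^\ast$, so this equation expresses that $(u^\ast f)^\dagger$ is the unique vertical arrow realizing the reindexing of $f^\dagger$ along $u$ in the dual fibration. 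Translating back through Lemma \ref{lemma:dual-fiber-lemma} identifies this with the reindexing $u^\ast(f^\dagger)$ computed in the original fibration, yielding $(u^\ast f)^\dagger = u^\ast(f^\dagger)$.

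The main hazard will be careful bookkeeping of the direction of arrows and the span representatives: composition in $\X^\ast$ is defined via spans with a vertical and a Cartesian leg, and translating equations through Lemma \ref{lemma:dual-fiber-lemma} requires noting that vertical morphisms in $\X^\ast$ are represented by spans whose Cartesian leg is an isomorphism, so that the span collapses to a reversed vertical arrow in $\X$. Once this identification is in hand and the chosen cleavage in $\X^\ast$ is matched to that of $\X$ via the dagger cleavage hypothesis, the remainder of the argument reduces to the uniqueness clause in the universal property of Cartesian arrows.
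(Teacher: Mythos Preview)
Your proposal is correct and follows essentially the same route as the paper: both define the fiber dagger by composing the restriction of $(-)^\dagger$ to vertical maps with the stationary-on-objects isomorphism ${q^\ast}^{-1}(A) \cong (q^{-1}(A))^{\text{op}}$ of Lemma~\ref{lemma:dual-fiber-lemma}, and both derive preservation under reindexing from the dagger cleavage hypothesis. The only stylistic difference is that the paper packages the reindexing argument into a commuting-square diagram (defining reindexing on opposite fibers by conjugating with the $\alpha$-isomorphisms), whereas you unwind the same content directly via the uniqueness clause of the Cartesian lifting property; your ``main hazard'' paragraph already identifies exactly the bookkeeping needed to match these up.
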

\begin{proof}
    First, ${q^*}^{-1}(A)$ is the category whose objects are those of $q^{-1}(A)$, and whose morphisms $X \to Y$ are 
    spans of the form $X \from^{h} Y = Y$.  These then correspond isomorphically to maps $Y \to^{h} X$ in $q^{-1}(A)$,
    and in fact there is an isomorphism of categories $\alpha_A$ that sends $(v,1) \mapsto v$.  The dagger on 
    $q^{-1}(A)$ is defined by the following diagram:
    \[
        \begin{tikzcd}[row sep=3em]
            q^{-1}(A) \ar[r,"(\blank)^\dagger_A"] \ar[dr,"(\blank)^{\dagger[A]}"'] & {q^*}^{-1}(A) \ar[d,"\alpha_A"] \\
            {} \ar[ur,"{:=}",phantom,near end]& q^{-1}(A)^\text{op}
        \end{tikzcd}
    \]
    The isomorphism $\alpha_A$ also induces a reindexing for opposite fibers:
    \[
        u^* \, := \, q^{-1}(A)^\text{op} \to^{\alpha_A^{-1}} {q^*}^{-1}(A) \to^{u^*} {q^*}^{-1}(B) \to^{\alpha_B} q^{-1}(B)^\text{op}
        \]

    Then consider the following diagram:
    \[
        \begin{tikzcd}
            q^{-1}(A) \ar[d,"u^*"'] \ar[rr,bend left,"(\blank)^{\dagger[A]}"] \ar[r,"(\blank)^\dagger_A"] & {q^*}^{-1}(A) \ar[d,"u^*"]\ar[r,"\alpha_A"] & q^{-1}(A)^\text{op} \ar[d,"u^*"]\\
            q^{-1}(B) \ar[rr,bend right,"(\blank)^{\dagger[B]}"'] \ar[r,"(\blank)^\dagger_B"'] & {q^*}^{-1}(B) \ar[r,"\alpha_A"'] & q^{-1}(B)^\text{op}
        \end{tikzcd}
        \]
    The right square and top and bottom triangles commute definitionally.  The commutativity of the left 
    square follows from the fact that $(\blank)^\dagger$ sends cloven cartesians to cloven cartesians.
  \end{proof}

\subsection{Characterization of Cartesian reverse differential categories}
We have seen in the previous sections that a Cartesian reverse differential category is a Cartesian differential category whose associated linear fibration is a dagger fibration in which each fibre has $\dagger$-biproducts.  In this final section, we show that this collection of structures characterizes Cartesian reverse differential categories.  

\begin{definition}\label{definition:context-dagger}
  Let $\X$ be a Cartesian differential category.  We say that $\X$ has a 
  {\bf contextual linear dagger} when the linear fibration is a dagger fibration
$$    \begin{tikzcd} 
      \widetilde{\mathsf{Lin}(\X)} \ar[dr,"\pi"'] \ar[rr,"{(\blank)^\dagger}"] && \widetilde{\mathsf{Lin}(\X)}^\ast \ar[dl,"\pi^\ast"]\\
       & \X 
    \end{tikzcd} $$
  and each fiber category $\mathsf{Lin}(\X)[A]$ has $\dagger$-biproducts.
 \end{definition}

By Lemma \ref{lemma:fibration-dagger-lemma}, every fiber of such a fibration is a 
$\dagger$-category, and reindexing functors preserve the dagger.
We denote the $\dagger$ in the fiber 
$\mathsf{Lin}(\X)[A]$ by $(\blank)^{\dagger[A]}$. In particular we note that $(\blank)^{\dagger[A]}$ preserves the additive structure. Before giving the main theorems of this section, we will need the following lemma: 

\begin{lemma}\label{lemma:cld.2}
  Let $\X$ be a Cartesian differential category with a 
  contextual linear dagger. 
  For any map $A \to^{f} B$ the following diagram commutes.
       $$      \begin{tikzcd}[column sep=8em]
              (A\x B) \x A \ar[r,"{\D[\D[f]^{\dagger[A]}]^{\dagger[A\x B]}}"] & A\x B \ar[d,"\pi_1"]\\
              A\x A \ar[u,"{\<1,0\> \x 1}"] \ar[r,"{\D[f]}"'] & B
             \end{tikzcd}  $$ 
  \end{lemma}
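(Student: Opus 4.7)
The plan is to decompose the second derivative $\D[g]$, where $g := \D[f]^{\dagger[A]} : A \x B \to A$, using the biproduct structure in the fiber $\mathsf{Lin}(\X)[A \x B]$, and then exploit the compatibility of the dagger with this biproduct to read off $\D[f]$ after post-composing with $\pi_1$. First, $g$ is linear in $B$: indeed, $\D[f]$ is linear in its second argument by {\bf [CDC.6]}, hence is a morphism in the fiber $\mathsf{Lin}(\X)[A]$, so its fiber-dagger $g$ lies in that same fiber and is linear in $B$. By {\bf [CDC.6]} again, $\D[g] : (A \x B) \x (A \x B) \to A$ is linear in its second argument, so it represents a morphism $A \oplus B \to A$ in the fiber $\mathsf{Lin}(\X)[A \x B]$, which by hypothesis has $\dagger$-biproducts.

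The first step is to decompose $\D[g] = \pi_0 \alpha + \pi_1 \beta$ as a fiber-composite along the biproduct $A \oplus B$, where $\alpha := \iota_0 \D[g]$ and $\beta := \iota_1 \D[g]$ are the $A$- and $B$-directional components; in $\X$ these unfold to $\alpha = (1 \x \iota_0) \D[g]$ and $\beta = (1 \x \iota_1) \D[g]$. Taking the dagger and using that it reverses composition, preserves sums, and satisfies $\pi_i^\dagger = \iota_i$, one obtains
$$\D[g]^{\dagger[A \x B]} = \alpha^{\dagger[A \x B]} \iota_0 + \beta^{\dagger[A \x B]} \iota_1.$$
Post-composing with $\pi_1 : A \x B \to B$ annihilates the first summand (since $\iota_0 \pi_1 = 0$ in any $\dagger$-biproduct), so the left-hand side of the claim reduces to $(\<1,0\> \x 1) \beta^{\dagger[A \x B]}$.

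The second step is to identify $\beta^{\dagger[A \x B]}$. Since $g$ is linear in $B$, Lemma \ref{lemma:linear-equiv-both} gives $D_B[g] = (1 \x \pi_1) g$, which unfolds to $\beta = (\pi_0 \x 1_B) g$; viewed fiberwise, this says $\beta$ is the reindexing $\pi_0^\ast g$ of $g$ (a fiber element over $A$) along $\pi_0 : A \x B \to A$. By Lemma \ref{lemma:fibration-dagger-lemma}, reindexing commutes with the fiberwise dagger, so
$$\beta^{\dagger[A \x B]} = \pi_0^\ast\bigl(g^{\dagger[A]}\bigr) = \pi_0^\ast(\D[f]) = (\pi_0 \x 1_A) \D[f],$$
where the middle equality uses involutivity of the fiber dagger on the linear map $\D[f]$, i.e.\ $\D[f]^{\dagger[A]\dagger[A]} = \D[f]$.

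Finally, since $\<1,0\>\pi_0 = 1_A$, we get $(\<1,0\> \x 1)(\pi_0 \x 1_A)\D[f] = \D[f]$, as required. The main obstacle is organisational rather than computational: one must carefully translate between the fiber-level algebra of biproducts, daggers, and reindexing, and the corresponding $\X$-level expressions, in order to isolate the $B$-component of $\D[g]^{\dagger[A \x B]}$ and recognise it, via reindexing, as a weakening of $\D[f]$.
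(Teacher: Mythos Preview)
Your proof is correct and uses the same ingredients as the paper's: linearity of $g=\D[f]^{\dagger[A]}$ in $B$, the $\dagger$-biproduct structure in the fiber, the fact that reindexing preserves the dagger, and involutivity $\D[f]^{\dagger[A]\dagger[A]}=\D[f]$. The only organisational difference is that the paper first establishes the general identity $h^{\dagger[C]}=(\iota_0\times 1)\,\D[h]^{\dagger[C\times A]}\,\pi_1$ for any $h$ linear in its second argument (via reindexing along $\iota_0:C\to C\times A$) and then specialises, whereas you work directly with $g$ and recognise $\beta=(1\times\iota_1)\D[g]$ as the reindexing $\pi_0^\ast g$ along $\pi_0:A\times B\to A$; since $\iota_0\pi_0=1_A$, the two reindexing arguments are equivalent.
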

  As done in the proof of Lemma \ref{lemma:partial-linear-equiv}, we will 
  distinguish maps $f : A \x B \to C$ as maps in $\mathsf{Lin}(\X)[A]$ by 
  underlining them $\underline{f} : B\to C$. 
   \begin{proof}
      First note that for any $C \x A \to^{f} B$ in $\mathsf{Lin}(\X)[C]$, that is, $f$ is linear in $A$, we have the following equalities: 
          \begin{align*}
        f^{\dagger[C]} &= ((\iota_0 \times \iota_1)\D[f])^{\dagger{C}}  \tag{Lemma \ref{lemma:partial-linear-equiv}}\\
                       &= (\iota_0 \x 1) \left((1\x \iota_1)\D[f] \right)^{\dagger[C]} \tag{reindexing preserves $\dagger$}\\
                       &= (\iota_0 \x 1) \left(\underline{\iota_1 \D[f]} \right)^{\dagger[A]} \\
                       &= (\iota_0 \x 1) \underline{\D[f]^{\dagger[C\x A]}\pi_1} \tag{$\dagger$-biproducts in the fiber}\\
                       &= (\iota_0 \x 1) \<\pi_0,\D[f]^{\dagger[C\x A]}\>\pi_1\pi_1\\
                       &= (\iota_0 \x 1)\D[f]^{\dagger[C\x A]}\pi_1 
      \end{align*}
  So $f^{\dagger[C]} = (\iota_0 \x 1)\D[f]^{\dagger[C\x A]}\pi_1$. 
      Now, let $A \to^{f} B$ be any map.  Note that $\D[f]$ is linear in its second $A$, 
      and thus $\D[f]^{\dagger[A]\dagger[A]} = \D[f]$.  Then applying the above result
      to $\D[f]^{\dagger[A]\dagger[A]}$ we get $\D[f] = (\iota_0 \x 1)\D[\D[f]^{\dagger[A]}]^{\dagger[A\x B]}$
      as required.
    \end{proof}

\begin{theorem}\label{theorem:rdc-from-cdc}
  A Cartesian differential category $\X$ with a contextual linear dagger is a 
  Cartesian reverse differential category with reverse differential combinator $\R$ defined as follows (for a map $A \to^f B$): 
$$ \R[f] := A\x B \to^{\D[f]^{\dagger[A]}} B $$
\end{theorem}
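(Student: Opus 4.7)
The plan is to verify the seven axioms {\bf [RD.1]}--{\bf [RD.7]} in turn, using $\R[f] := \D[f]^{\dagger[A]}$ for $A \to^{f} B$, together with three features of a contextual linear dagger: first, each fiber $\mathsf{Lin}(\X)[A]$ is a $\dagger$-biproduct category, so every morphism there is additive and the dagger preserves sums and zero; second, the dagger in each fiber takes a morphism of $\mathsf{Lin}(\X)[A]$ to another morphism of the same fiber (Lemma \ref{lemma:fibration-dagger-lemma}); third, $(\blank)^{\dagger}$ is a functor of fibrations $\widetilde{\mathsf{Lin}(\X)} \to \widetilde{\mathsf{Lin}(\X)}^\ast$.

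The axioms {\bf [RD.1]}--{\bf [RD.4]} are each direct consequences of the corresponding {\bf [CDC.$i$]} combined with a $\dagger$-biproduct identity in the relevant fiber. For {\bf [RD.1]}, I apply {\bf [CDC.1]} and distribute the dagger across sums; similarly $\R[0]=0$. For {\bf [RD.2]}, additivity of $\R[f]$ in its second argument is immediate because, by the second fact above, $\R[f]$ is a morphism of $\mathsf{Lin}(\X)[A]$ from $B$ to $A$. For {\bf [RD.3]}, the identity in $\mathsf{Lin}(\X)[A]$ corresponds to $\pi_1$ in the base, which the dagger fixes, so $\R[1]=\pi_1$; for $\R[\pi_0]$, {\bf [CDC.3]} gives $\D[\pi_0]=\pi_1\pi_0$, which represents the biproduct projection $A\x B \to A$ in $\mathsf{Lin}(\X)[A\x B]$, and by $\dagger$-biproducts this dualises to the corresponding injection, representing $\pi_1\iota_0$ in the base. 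For {\bf [RD.4]}, {\bf [CDC.4]} gives $\D[\langle f,g\rangle] = \langle \D[f],\D[g]\rangle$; in the fiber the $\dagger$-biproduct identity $\langle h,k\rangle^\dagger = \pi_0 h^\dagger + \pi_1 k^\dagger$ translates back to the base as $(1\x\pi_0)\R[f] + (1\x\pi_1)\R[g]$, while $\R[!_A]=0$ is immediate from $\D[!_A]=0$.

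For the chain rule {\bf [RD.5]}, I exploit functoriality of $(\blank)^{\dagger}$. There is a product-preserving functor $\X \to \widetilde{\mathsf{Lin}(\X)}$ sending $A \mapsto (A,A)$ and $f \mapsto (f,\D[f])$, with composition $(f,\D[f])(g,\D[g]) = (fg,\D[fg])$ encoding {\bf [CDC.5]}. Applying the dagger functor yields $(fg,\R[fg]) = (f,\R[f])\cdot (g,\R[g])$ computed in $\widetilde{\mathsf{Lin}(\X)}^\ast$; expanding the right-hand side using the composition formula for the dual of the simple fibration (displayed in the running example) yields exactly the diagram in {\bf [RD.5]}.

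Finally, {\bf [RD.6]} and {\bf [RD.7]} reduce to known facts about $\D$ via Lemma \ref{lemma:cld.2}. Applied to $\R[f]$, that lemma rewrites $(\iota_0\x 1)\R[\R[\R[f]]]\pi_1$ as $\D[\R[f]]$, so {\bf [RD.6]} becomes $D_B[\R[f]] = (1\x\pi_1)\R[f]$, i.e.\ $\R[f]$ is linear in $B$; this follows from the second fact above via Lemma \ref{lemma:linear-equiv-both}. Two applications of Lemma \ref{lemma:cld.2} reduce {\bf [RD.7]} to $\D[\D[f]] = \ex \D[\D[f]]$, which is {\bf [CDC.7]} by Lemma \ref{lemma:cd6/7-lemma}. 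The main obstacle is {\bf [RD.5]}: functoriality delivers it in principle, but matching the dual-fibration composition literally against the inference-rule diagram of {\bf [RD.5]} requires careful bookkeeping, as does the recurring translation between morphisms of a fiber and their base-category representatives used implicitly in {\bf [RD.3]}, {\bf [RD.4]}, and {\bf [RD.6]}.
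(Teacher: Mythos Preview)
Your proposal is correct and follows essentially the same route as the paper's own proof: each of {\bf [RD.1]}--{\bf [RD.4]} is derived from the matching {\bf [CDC.$i$]} plus a $\dagger$-biproduct identity in the fiber; {\bf [RD.5]} is obtained by composing the functor $f \mapsto (f,\D[f])$ with the dagger morphism of fibrations; and {\bf [RD.6]}--{\bf [RD.7]} are reduced via Lemma~\ref{lemma:cld.2} to linearity of $\R[f]$ in its second argument and to $\ex\,\D[\D[f]] = \D[\D[f]]$ respectively. The only cosmetic difference is that for {\bf [RD.6]} the paper unfolds Lemma~\ref{lemma:cld.2} directly on $f$ whereas you instantiate it at $\R[f]$, but this amounts to the same computation.
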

\begin{proof}
    We define the reverse differential combinator as follows
    \[
      \infer{A\x B \to_{\R[f] := \D[f]^{\dagger[A]}} A}{A\to^{f} B}  
    \]
    noting the above makes sense because $\D[f]$ is linear in the second $A$.
  
    \begin{enumerate}[{\bf [RD.1]}]
      \item The calculation is as follows 
            \begin{align*}
              \R[f+g] &= \D[f+g]^{\dagger[A]} \\
                     &= (\D[f]+ \D[g])^{\dagger[A]} \\
                     &= \D[f]^{\dagger[A]} + \D[g]^{\dagger[A]} \\
                     &= \R[f]+\R[g]
            \end{align*}
            Similarly, $\R[0]=0$.
      \item Note that linear implies additive and the typing of $\dagger$ on a 
            a fiber sends maps that are linear in their second argument to maps that 
            are linear in their second argument.  In particular $\D[f]^{\dagger[A]}$ Is
            linear in its second argument.  Thus 
            \[
              \<a,b+c\>\R[f] = \<a,b+c\>\D[f]^{\dagger[A]}= \<a,b\>\R[f] + \<a,c\>\R[f]  
            \]
            Similarly, $\<a,0\>\R[g]= 0$.
      \item To show that $\R[1] = \pi_1$, first note that $\dagger[A]$ is a functor 
            in the fiber over $A$.  In particular $\underline{1}^{\dagger[A]} = \underline{1}$ but 
            $\underline{1} = \pi_1$.  Then we note that 
            \[
              \R[1] = (\D[1])^{\dagger[A]} = \pi_1^{\dagger[A]} = \underline{1}^{\dagger[A]} = \underline{1}=\pi_1  
            \]
            Similarly, $\dagger$ is a gives a $\dagger$-biproduct structure in each 
            fiber, hence $\underline{\pi_i}^{\dagger[A\x B]} = \underline{\iota_i}$.
            Then,
            \[
            \R[\pi_0] = (\D[\pi_0])^{\dagger[A\x B]}
                     = (\pi_1\pi_0)^{\dagger[A\x B]}
                     = \underline{\pi_0}^{\dagger[A\x B]}
                     = \underline{\iota_0}
                     = \pi_1 \iota_0  
            \]
            as desired.  Similarly, $\R[\pi_1] = \pi_1\iota_1$.
      \item We have the following calculation, where we note that the pairing in  
            a fiber is the pairing of the maps in the underlying category.
            \begin{align*}
              \R[\<f,g\>]
                &= \D[\<f,g\>]^{\dagger[A]} \\
                &= (\<\D[f],\D[g]\>)^{\dagger[A]}\\
                &= (\underline{\<\D[f],\D[g]\>})^{\dagger[A]}\\
                &= (\underline{\D[f]\iota_0 + \D[g]\iota_1})^{\dagger[A]}\\
                &= \underline{\pi_0 \D[f]^{\dagger[A]} + \pi_1 \D[g]^{\dagger[A]}} \qquad \text{$\dagger$ is contravariant}\\
                &= \<\pi_0,\pi_1\pi_0\>\D[f]^{\dagger[A]} + \<\pi_0,\pi_1\pi_1\>\D[g]^{\dagger[A]}\\
                &= (1\x \pi_0) \R[f] + (1\x \pi_1)\R[g]
            \end{align*}
      \item Here we use that {\bf [RD.5]} is equivalent to asking that the assignment 
            \[
              \X \to^{\R} \widetilde{\mathsf{Lin}(\X)}^\text{op} ; f \mapsto (f,R[f])  
            \]      
            be functorial.  Also, {\bf [CDC.5]} says that 
            \[
              \X \to^{D}   \widetilde{\mathsf{Lin}(\X)} ; f \mapsto (f,\D[f])
            \]
            is functorial.  Then we have 
            \[
              \X \to^{D} \widetilde{\mathsf{Lin}(\X)} \to^{\dagger} \widetilde{\mathsf{Lin}(\X)}^\text{op}
            \]
            is the assignment $f \mapsto (f,\R[f])$.  Hence, as functors compose, {\bf [RD.5]} holds.
      \item Our goal is to show that 
        \[\<1\x \pi_0,0\x \pi_1\>(\<1,0\> \x 1)R^{(3)}[f]\pi_1 = (1\x \pi_1)R[f]\]
        Here we use the coherence {Lemma \ref{lemma:cld.2}}:
            \begin{align*}
              & \<1\x\pi_0,0\x \pi_1\>(\<1,0\>\x 1)\R^{(3)}[f]\pi_1 \\
              &= \<1\x\pi_0,0\x \pi_1\>(\<1,0\>\x 1) \D[ \D[ \D[f]^{\dagger[A]} ]^{\dagger[A\x B]} ]^{\dagger[(A\x B)\x A]}\pi_1\\
              &= \<1\x\pi_0,0\x \pi_1\> \D[\D[f]^{\dagger[A]}]
            \end{align*}
        Now we invoke the fact that dagger sends maps that are linear in their second argument
        to maps that are linear in their second argument.  Thus $\D[f]^{\dagger[A]}$ is linear in 
        its second argument.  But then that means 
        \[
         \<1\x \pi_0,0\x \pi_1\>\D[\D[f]^{\dagger[A]}] = (1\x \pi_1) \D[f]^{\dagger[A]}
           = (1\x \pi_1)\R[f]
        \]
        as desired.
      \item This is immediate from Lemma \ref{lemma:cld.2}: applying it 
            twice to both sides gives $\D[\D[f]] = \ex \D[\D[f]]$ which holds by {\bf [CDC.7]}.
    \end{enumerate}
  \end{proof}


We conclude with the main result of this paper: 

\begin{theorem}\label{thm:reverseDerivEquiv}
A Cartesian reverse differential category is precisely a Cartesian 
  differential category with a contextual linear dagger.
\end{theorem}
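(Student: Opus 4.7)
The plan is to observe that the three main theorems already proved (Theorems \ref{theorem:cdc-from-rdc}, \ref{thm:ReverseHasDualFibration}, and \ref{theorem:rdc-from-cdc}) supply constructions in both directions, so all that remains is to verify they are mutually inverse. Recall that the forward-to-reverse direction sets $\R[f] := \D[f]^{\dagger[A]}$, while the reverse-to-forward direction sets $\D[f] := (\<1,0\> \x 1)\R[\R[f]]\pi_1$ with induced contextual dagger $g^{\dagger[C]} := (\iota_0 \x 1)\R[g]\pi_1$ from Definition \ref{defn:ContextDagger}. A useful preliminary observation, since $\iota_0 = \<1,0\>$, is that the reverse-to-forward formula can be rewritten as $\D[f] = \R[f]^{\dagger[A]}$.

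For the round trip CRDC $\to$ CDC+$\dagger$ $\to$ CRDC starting from $\R$, the reconstructed combinator is $\R'[f] = \D[f]^{\dagger[A]} = \R[f]^{\dagger[A]\dagger[A]}$. By Lemma \ref{lemma:rd6-alternative-form}, $\R[f]$ is linear in its second argument, so Lemma \ref{lemma:dagger-context-linear} gives $\R[f]^{\dagger[A]\dagger[A]} = \R[f]$, hence $\R' = \R$.

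For the round trip CDC+$\dagger$ $\to$ CRDC $\to$ CDC+$\dagger$ starting from $\D$ and $\dagger$, the reconstructed forward combinator is
\[
\D'[f] = (\<1,0\> \x 1)\R[\R[f]]\pi_1 = (\<1,0\> \x 1)\D[\D[f]^{\dagger[A]}]^{\dagger[A\x B]}\pi_1,
\]
which is precisely the left-hand side appearing in Lemma \ref{lemma:cld.2}; that lemma identifies this with $\D[f]$. To recover $\dagger$ on the nose, I would invoke the preparatory identity established inside the proof of Lemma \ref{lemma:cld.2}, namely $f^{\dagger[C]} = (\iota_0 \x 1)\D[f]^{\dagger[C\x A]}\pi_1$ for any $f$ linear in its second argument; the right-hand side is exactly the reconstructed $(\iota_0 \x 1)\R[f]\pi_1$, so $\dagger' = \dagger$.

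The main obstacle is really just Lemma \ref{lemma:cld.2}, which is already in hand: once that coherence between the forward derivative and the contextual dagger is proved, both directions of the equivalence drop out via the involutivity of the contextual dagger on maps that are linear in their second argument. The underlying Cartesian left additive structure is unchanged throughout, so no further axiom checking is needed beyond these round-trip identifications.
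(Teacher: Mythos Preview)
Your proposal is correct and follows the same overall strategy as the paper: invoke Theorems \ref{theorem:cdc-from-rdc}, \ref{thm:ReverseHasDualFibration}, and \ref{theorem:rdc-from-cdc} for the two constructions, then verify both round trips (you should also cite Proposition \ref{prop:reverseGivesDaggerBiproducts} for the $\dagger$-biproduct condition in the fibres). For the CDC$+\dagger$ round trip both you and the paper appeal directly to Lemma \ref{lemma:cld.2}. For the CRDC round trip the paper performs a short explicit calculation that reduces $\D[f]^{\dagger[A]}$ to the expression in Lemma \ref{lemma:rd6-alternative-form}, whereas you first observe $\D[f] = \R[f]^{\dagger[A]}$ and then invoke the involutivity statement of Lemma \ref{lemma:dagger-context-linear}; this is slightly more conceptual but rests on the same underlying fact, namely {\bf [RD.6]}. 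You also explicitly verify that the contextual dagger itself is recovered (using the identity $f^{\dagger[C]} = (\iota_0 \times 1)\D[f]^{\dagger[C\times A]}\pi_1$ established in the proof of Lemma \ref{lemma:cld.2}), a point the paper's proof leaves implicit.
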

\begin{proof}
    Let $\X$ be a Cartesian reverse differential category.  Then $\X$ is a Cartesian 
    differential category by Theorem \ref{theorem:cdc-from-rdc}, its associated linear fibration is a dagger fibration by Theorem \ref{thm:ReverseHasDualFibration}, and each fibre has $\dagger$-biproducts by Proposition \ref{prop:reverseGivesDaggerBiproducts}.  
  
    Conversely, if $\X$ is a Cartesian differential category with contextual linear 
    dagger, then $\X$ is a reverse differential category by Theorem \ref{theorem:rdc-from-cdc}.

    The only thing left to show is that the constructions of reverse derivatives and 
    Cartesian derivatives used in the above are inverse to each other.
  
    First, on the one hand, if we start with a Cartesian differential category with 
    contextual linear dagger, form the reverse derivative from this, then
    form a Cartesian derivative from the induced reverse derivative, Lemma \ref{lemma:cld.2} says that the resulting induced Cartesian derivative
    structure is the starting differential structure.
  
    On the other hand, suppose that we start with a reverse derivative, and define the 
    Cartesian derivative by $\D[f] = (\<1,0\> \x 1)\R^{(2)}[f]\pi_1$.  Then,
    after this we use the induced contextual $\dagger$ of Definition \ref{definition:context-dagger},
    to define a reverse derivative.  This has 
    \[
      \infer{
        \infer{
          \D[f]^{\dagger[A]}:=
          (\<1,0\> \x 1)\R[(\<1,0\> \x1)\R^{(2)}[f]\pi_1]\pi_1
        }{
          A\x A \to^{\<1,0\> \x 1}(A\x B)\x A \to^{\R^{(2)}[f]} A\x B \to^{\pi_1 }B
        }
      }{A\to^{f} B}  
    \]
    and we want to show that $\D[f]^{\dagger[A]} = R[f]$.  Yet,
    \begin{align*}
      & (\<1,0\> \x 1)\R[(\<1,0\> \x1)\R^{(2)}[f]\pi_1]\pi_1 \\
      &= (\<1,0\> \x 1)((\<1,0\> \x 1)\x 1)\R[\R^{(2)}[f]\pi_1](\pi_0 \x 1)\pi_1\\
      &= (\<1,0\> \x 1)((\<1,0\> \x 1)\x 1)(1\x \iota_1)\R^{(3)}[f]\pi_1\\
      &= (\<\<1,0\>,0\>\x \<0,1\>)\R^{(3)}[f]\pi_1 \\
      &= \R[f] \qquad \text{Lemma \ref{lemma:rd6-alternative-form}}
    \end{align*}
  \end{proof}

\section{Concluding remarks}\label{section:conclusion}
This paper begins the story of categories with a reverse derivative; however,
 there is much more that needs to be done in this area.  Perhaps the most
  important next step is to add partiality into this setting.  One way to add partiality to categories is via a restriction structure
   \cite{journal:rcats1}.  The paper \cite{journal:diff-rest} showed how
    to combine a Cartesian differential structure with a restriction structure 
    to obtain ``differential restriction categories.''  This provides an axiomatization for categories of
     smooth partial maps.  A key next step is then to combine reverse 
     differential categories with restriction structure, and check that many 
     of the results that held for differential restriction categories hold 
     for ``reverse differential restriction categories''.  Such a structure 
     would bring us even closer to a true categorical semantics for differential 
     programming.

Another important aspect to develop will be the term logic for reverse 
differential categories.  The term logic for Cartesian differential categories 
greatly facilitates the ability to establish and prove results in that abstract setting; a 
term logic for  reverse differential categories is similarly important.

Tensors are another important aspect of differential programming, and form the 
foundations on which modern, large scale machine learning platforms are based 
\cite{tensorflow2015-whitepaper}.  In \cite{journal:BCS-Storage}, monoidal structure was described in a way that
interacts well with differentiation.  In particular, $V\ox W$ is the object
for which bilinear maps $V\x W \to U$ correspond to linear maps $V\ox W \to U$.  Developing a similar structure for the reverse derivative will thus also be important.  More generally, there should be a notion of (monoidal) reverse differential category.  These should provide additional examples of Cartesian reverse differential categories: just as the coKleisli category of a (monoidal) differential category \cite{DiffCats} is a Cartesian differential category, so should the coKleisli category of a monoidal reverse differential category be a Cartesian reverse differential category.

Finally, an important generalization of Cartesian differential categories are tangent
 categories \cite{journal:TangentCats}, a categorical setting for differential
  geometry which axiomatizes the existence of a ``tangent bundle'' for each object.
      Every Cartesian differential category gives rise to a tangent category.  
      A reverse derivative category should give a 
      ``category with a cotangent bundle for each object''; defining such categories will be another important extension of this work.

\bibliography{ReverseDerivatives-Arxiv}

\begin{thebibliography}{10}

\bibitem{tensorflow2015-whitepaper}
Mart\'{\i}n Abadi, Ashish Agarwal, Paul Barham, Eugene Brevdo, Zhifeng Chen,
  Craig Citro, Greg~S. Corrado, Andy Davis, Jeffrey Dean, Matthieu Devin,
  Sanjay Ghemawat, Ian Goodfellow, Andrew Harp, Geoffrey Irving, Michael Isard,
  Yangqing Jia, Rafal Jozefowicz, Lukasz Kaiser, Manjunath Kudlur, Josh
  Levenberg, Dandelion Man\'{e}, Rajat Monga, Sherry Moore, Derek Murray, Chris
  Olah, Mike Schuster, Jonathon Shlens, Benoit Steiner, Ilya Sutskever, Kunal
  Talwar, Paul Tucker, Vincent Vanhoucke, Vijay Vasudevan, Fernanda Vi\'{e}gas,
  Oriol Vinyals, Pete Warden, Martin Wattenberg, Martin Wicke, Yuan Yu, and
  Xiaoqiang Zheng.
\newblock {TensorFlow}: Large-scale machine learning on heterogeneous systems,
  2015.
\newblock Software available from tensorflow.org.
\newblock URL: \url{https://www.tensorflow.org/}.

\bibitem{journal:BCS:CDC}
R.~Blute, R.~Cockett, and R.~Seely.
\newblock {Cartesian Differential Categories}.
\newblock {\em Theory and Applications of Categories}, 22:622--672, 2009.

\bibitem{journal:BCS-Storage}
R.~Blute, R.~Cockett, and R.~Seely.
\newblock {Cartesian Differential Storage Categories}.
\newblock {\em Theory and Applications of Categories}, 30(18):620--686, 2015.

\bibitem{DiffCats}
R.F. Blute, J.R.B. Cockett, and R.A.G. Seely.
\newblock Differential categories.
\newblock {\em Mathematical structures in computer science}, 16(6):1049--1083,
  2006.

\bibitem{book:categoricalalgebra}
F.~Borceaux.
\newblock {\em {Handbook of categorical algebra II}}.
\newblock Cambridge University Press, 2008.

\bibitem{chr12}
Bruce Christianson.
\newblock A {L}eibniz notation for automatic differentiation.
\newblock In {\em Recent Advances in Algorithmic Differentiation}, volume~87 of
  {\em Lecture Notes in Computational Science and Engineering}, pages 1--9.
  Springer, 2012.

\bibitem{journal:diff-rest}
J.R.B. Cockett, G.S.H. Cruttwell, and J.D. Gallagher.
\newblock Differential restriction categories.
\newblock {\em Theory and applications of categories}, 25(21):537--613, 2011.

\bibitem{journal:rcats1}
J.R.B. Cockett and Stephen Lack.
\newblock Restriction categories i: categories of partial maps.
\newblock {\em Theoretical Computer Science}, 270(1):223 -- 259, 2002.
\newblock URL:
  \url{http://www.sciencedirect.com/science/article/pii/S0304397500003820},
  \href {http://dx.doi.org/https://doi.org/10.1016/S0304-3975(00)00382-0}
  {\path{doi:https://doi.org/10.1016/S0304-3975(00)00382-0}}.

\bibitem{journal:TangentCats}
R.~Cockett and G.~Cruttwell.
\newblock {Differential structure, tangent structure, and SDG}.
\newblock {\em Applied Categorical Structures}, 22:331--417, 2014.

\bibitem{elliott-AD-ICFP}
Conal Elliott.
\newblock The simple essence of automatic differentiation.
\newblock {\em Proceedings of the ACM on Programming Languages}, 2(ICFP):70,
  2018.

\bibitem{griewank2012invented}
Andreas Griewank.
\newblock Who invented the reverse mode of differentiation.
\newblock {\em Documenta Mathematica, Extra Volume ISMP}, pages 389--400, 2012.

\bibitem{book:Jacobs-Cat-Log}
B.~Jacobs.
\newblock {\em {Categorical logic and type theory}}.
\newblock Number 141 in {Studies in logic and the foundations of mathematics}.
  Elsevier, 1999.

\bibitem{arxiv:simple-dual}
Anders {Kock}.
\newblock {The dual fibration in elementary terms}.
\newblock {\em arXiv e-prints}, page arXiv:1501.01947, Jan 2015.
\newblock \href {http://arxiv.org/abs/1501.01947} {\path{arXiv:1501.01947}}.

\bibitem{lemay2018tangent}
J-S~P. Lemay.
\newblock A tangent category alternative to the faa di bruno construction.
\newblock {\em Theory and Applications of Categories}, 33(35):1072--1110, 2018.

\bibitem{linnainmaa}
S.~Linnainmaa.
\newblock {Taylor expansion of the accumulated rounding error.}
\newblock {\em BIT Numerical Mathematics}, 16:146--160, 1976.

\bibitem{talk:plotkin-2018}
G.~Plotkin.
\newblock A simple differential programming language.
\newblock MFPS 2018 Keynote Address, June 2018.

\bibitem{backpropagation}
David~E Rumelhart, Geoffrey~E Hinton, and Ronald~J Williams.
\newblock Learning representations by backpropagating errors.
\newblock {\em Cognitive modeling}, 5:3, 1988.
\newblock URL: \url{www.cs.toronto.edu/hinton/naturebp.pdf}.

\bibitem{journal:selinger-dagger}
Peter Selinger.
\newblock Dagger compact closed categories and completely positive maps.
\newblock {\em Electron. Notes Theor. Comput. Sci.}, 170:139--163, March 2007.
\newblock URL: \url{http://dx.doi.org/10.1016/j.entcs.2006.12.018}, \href
  {http://dx.doi.org/10.1016/j.entcs.2006.12.018}
  {\path{doi:10.1016/j.entcs.2006.12.018}}.

\end{thebibliography}

\end{document}